\documentclass[11pt]{article}

\usepackage[hmargin=1.2in,vmargin=1.2in]{geometry}
\usepackage{amsmath,amsthm,amssymb,amsfonts,verbatim}
\usepackage{fullpage}
\usepackage{hyperref}

 \providecommand{\F}{\mathbb{F}}

\parskip=0.5ex
\title{Beating the probabilistic lower bound on $q$-perfect hashing\thanks{Part of this work appeared at SODA 2021 \cite{XY21} where we only showed that the probabilistic lower bound can be improved for sufficiently large $q$ with $q\neq 2\pmod{4}$.}}
\author{{\sf Chaoping Xing}\thanks{School of Electronic Information and Electrical Engineering,  Shanghai Jiao Tong University, Shanghai 200240, China. Email:{\tt xingcp@sjtu.edu.cn}. The research of C. Xing is supported in part by the National Natural Science Foundation of China under Grant 12031011 and the National Key Research and Development Projects 2021YFE0109900 and 2020YFA0712300.}
\and {\sf Chen Yuan}\thanks{School of Electronic Information and Electrical Engineering,  Shanghai Jiao Tong University, Shanghai 200240, China. Email:{\tt chen\_yuan@sjtu.edu.cn}. The research of C. Yuan is supported in part by the National Natural Science Foundation of China under Grant 12101403.}
}

\newtheorem{lemma}{Lemma}[section]
\newtheorem{theorem}[lemma]{Theorem}
\newtheorem{prop}[lemma]{Proposition}
\newtheorem{con}[lemma]{Conjecture}
\newtheorem{cor}[lemma]{Corollary}

\newtheorem{defn}{Definition}

\theoremstyle{remark}
\newtheorem{rmk}{Remark}

\renewcommand{\epsilon}{\varepsilon}
\renewcommand{\le}{\leqslant}
\renewcommand{\ge}{\geqslant}

\def\proj{\mathrm{proj}}



\def\ZZ{\mathbb{Z}}

\def \mA {\mathcal{A}}

\def \mA {\mathcal{A}}

\def \mS {\mathcal{S}}

\newcommand{\Ga}{\alpha}

\def \bc {{\bf c}}
\def \bi {{\bf 1}}
\def \bx {{\bf x}}

\def \by {{\bf y}}

\def\proj{{\rm proj}}

\date{}
\pagestyle{plain}
\begin{document}
\maketitle

\begin{abstract}
For an integer $q\ge 2$, a perfect $q$-hash code $C$  is a  block code over $[q]:=\{1,\ldots,q\}$ of length $n$ in which every subset $\{\mathbf{c}_1,\mathbf{c}_2,\dots,\mathbf{c}_q\}$ of $q$ elements is separated, i.e., there exists $i\in[n]$ such that $\{\mathrm{proj}_i(\mathbf{c}_1),\dots,\mathrm{proj}_i(\mathbf{c}_q)\}=[q]$, where $\mathrm{proj}_i(\mathbf{c}_j)$ denotes the $i$th position of $\mathbf{c}_j$. Finding the maximum size $M(n,q)$ of  perfect $q$-hash codes of length $n$, for given $q$ and $n$, is  a fundamental problem in combinatorics, information theory, and computer science. In this paper, we are interested in asymptotic behavior of this problem. Precisely speaking, we will focus on the quantity $R_q:=\limsup_{n\rightarrow\infty}\frac{\log_2 M(n,q)}n$.

A well-known probabilistic argument shows an existence lower bound on $R_q$, namely $R_q\ge\frac1{q-1}\log_2\left(\frac1{1-q!/q^q}\right)$ \cite{FK,K86}. This is still the best-known lower bound till now except for the case $q=3$  \cite{KM}. The improved lower bound of $R_3$ was discovered in 1988 and there has been no progress on the lower bound of $R_q$ for more than $30$ years. In this paper we show that this probabilistic lower bound can be improved for $q$ from $4$ to $15$ and all odd integers between $17$ and $25$, and \emph{all sufficiently large} $q$.
\end{abstract}

\section{Introduction}
Probabilistic method is widely used to prove the existence of an object
meeting a certain condition in theoretical computer science and extremal combinatorics. Instead of constructing such object explicitly, one only needs to prove that such object occurs with positive probability. This feature makes it a powerful tool in deriving lower bound. Moreover, in most cases, the lower bound provided by probabilistic method turns out to be the best. However, some exceptional examples occur such as the Gilbert-Varshamov bound in coding theory \cite{TVZ} and the probabilistic lower bound on perfect hash codes \cite{KM}. In this paper, we study lower bounds on perfect hash codes and compare them with the probabilistic lower bound. There are many applications of perfect hashing: for example, see \cite{AYZ}, \cite{TT}.

A perfect $q$-hash code $C\subseteq [q]^n$ is a $q$-ary code such that for every subset of $C$ containing $q$ codewords, there exists an coordinate where the $q$ codewords in this subset have distinct values. By convention, the rate of this $q$-hash code is defined as $R_C=\frac{\log_2 |C|}{n}$.

The existence of a perfect $q$-hash code gives rise to a perfect $q$-hash family. To see this, let $C$ be the whole universe and the projection of each coordinate be a hash function. Then, for any $q$ elements of this universe, there exists a hash function mapping them to distinct values. Another application of perfect $q$-hash code is the zero-error list decoding on certain channels. A channel can be thought of as a bipartite graph $(V;W;E)$, where $V$ is the set of channel inputs, $W$ is the set of channel outputs, and  $(w,v)\in E$ if on input $v$, the channel can output $w$.
The $q/(q-1)$ channel then is the channel with $V = W =\{0,1,\dots,q-1\}$, and $(v,w)\in E$ if and only if $v\neq w$.
If we want to ensure that the receiver can identify a
subset of at most $q-1$ sequences that is guaranteed to contain the transmitted sequence,  one can communicate via $n$ repeated uses of the channel using the perfect $q$-hash code. See \cite{E88, DGR17}
for more details.

In this paper, we only consider the asymptotic behavior of rates of perfect $q$-hash codes, namely, we focus on the quantity $R_q:=\limsup_{n\rightarrow\infty}\frac{\log_2 M(n,q)}n$, where $M(n,q)$ stands for the maximum size of perfect $q$-hash codes of length $n$.

The study of $R_q$ could be dated back to $80$s. There are a few works dedicated to the upper bound on $R_q$. Fredman and Koml\'{o}s \cite{FK} showed a general upper bound: $R_q\le \frac{q!}{q^{q-1}}$ for all $q\ge 2$.  Arikan \cite{Arikan} improved this bound for $q=4$, and then Dalai, Guruswami and Radhakrishnan \cite{DGR17} further improved the upper bound on $R_4$.
Recently, Guruswami and Riazanov \cite{GR} discovered a stronger bound for every $q\geq 4$.
Costa and Dalai \cite{CD20} show that it is possible to explicitly compute this improvement over the previous upper bound.
Fiore, Costa and Dalai \cite{FCD} further improved the bound for small $b$ and $k$.

 Although there are some works towards tightening the upper bound on $R_q$, there are very few results about lower bounds on $R_q$. A plain probabilistic argument shows the existence of perfect $q$-hash code with rate  $R_q\ge\frac1{q-1}\log_2\left(\frac1{1-q!/q^q}\right)$ \cite{FK,K86}. This is still the best-known lower bound till now except for the case $q=3$  for which K\"{o}rner and Matron \cite{KM} found that the concatenation technique could lead to perfect $3$-hash codes beating the probabilistic lower bound. The improvement on the lower bound on $R_3$ was discovered in 1988 and there has been no progress on lower bounds on $R_q$ for more than 30 years.
K\"{o}rner and Matron's idea is to concatenate  an outer code, an $9$-ary $3$-hash code with an inner code, a perfect $3$-hash code with size $9$. They further posed an open problem whether there exist perfect $q$-hash codes beating the random argument for every $q$. In this paper, we provide a partial and affirmative answer to this open problem. We show that there exist perfect $q$-hash codes beating the random argument for all sufficiently large $q$ with $q\neq 2 \pmod 4$. To complement this result, we also prove the existence of perfect $q$-hash code that could beat random result for small  $q$ from $4$ to $15$ and odd $q$ between $17$ and $25$, as well as many other odd integers between $27$ and $155$.
Our computer  search result together with asymptotic result suggests that our construction might beat the probabilistic lower bound for every integer $q$.

The main technique of this paper is a modified version of concatenation. Unlike K\"{o}rner and Matron's concatenation where both inner and outer codes must be separated,  we abandon this separateness of inner code at a cost of imposing a stronger requirement on the outer code. By relaxing the condition that the inner code is a perfect $q$-hash code, we have more freedom to construct the inner code. As a result, we are able to improve the lower bound on $R_q$.

Before explaining our technique in detail, let us recall the concatenation technique introduced by K\"{o}rner and Matron. A plain probabilistic argument can prove the existence of an $m$-ary outer code $C_1$ of length $n_1$ that is $q$-separated with $q\le m$, i.e., for every $q$-element subset of $C_1$ (a $q$-element set is a set of size $q$), there exists $i\in\{1,2,\dots, n_1\}$ such that elements of this $q$-subset are pairwise distinct at position $i$. Then, they construct a perfect $q$-hash code $C_2$ of length $n_2$ as the inner code. By concatenating $C_1$ with $C_2$ (see Lemma \ref{lm:concatenation} for detail), they obtain a perfect $q$-hash code of length $n_1n_2$. In this way, they managed to prove the existence of $3$-perfect code beating the probabilistic lower bound.

 In our concatenation, we make a trade-off between inner code and outer code by relaxing the condition on the inner code and imposing a stronger condition on the outer code. By taking a set $\mA$ consisting of some $q$-element subsets of $[m]$, we apply the probabilistic method to show the existence of an $m$-ary outer code $C_1$ such that, for every $q$-element subset $\{\bc_1,\bc_2,\dots,\bc_q\}$ of $C_1$, there exists $i$ such that $\{\proj_i(\bc_1),\proj_i(\bc_2),\dots,\proj_i(\bc_{n_1})\}\in \mA$, where $\proj_i(\bc_j)$ stands for the $i$th coordinate of $\bc_j$. Note that K\"{o}rner and Matron's concatenation only requires that there exists $i$ such that $\{\proj_i(\bc_1),\proj_i(\bc_2),\dots,\proj_i(\bc_{n_1})\}$ are pairwise distinct. In this sense, we extend their idea by confining $\{\proj_i(\bc_1),\proj_i(\bc_2),\dots,\proj_i(\bc_{n_1})\}$ to be one of the subset in $\mA$. If there is an inner code $C_2$ such that at least $|\mA|$ $q$-codewords subsets of $C_2$ are separated, we can concatenate $C_1$ with $C_2$ to obtain a perfect $q$-hash code. Now, it remains to look for suitable inner code $C_2$. One good candidate for the inner code is the Maximum Distance Separable (MDS) code. In this paper, we let $C_2$ to be an $[3,2]$-MDS code over a abelian group of size $q$. We then reduce determining the number of separated  $q$-element subsets of $C_2$ to determining the number of  $q$-element subsets of $C_2$ in which all three positions are separated. It turns out that the latter problem is equivalent to the following well-known combinatorial problem: determine the number $s_q$ of pairs $(\pi_1,\pi_2)$ of bijections $[q]\rightarrow \ZZ_q$ such that $\pi_1+\pi_2$ is a bijection of $\ZZ_q$ as well. By using exact values of $s_q$ for small $q$ or estimates for moderate $q$ from \cite{K07}, we prove our results for odd $q\leq 155$. The value of $s_q$ was determined asymptotically by Eberhard, Manners, and Mrazovic \cite{BMM}, and by using this result, as well as related work of Eberhard \cite{Ebe2017}, we prove our result for all sufficiently large $q$.

 There is an asymptotic result on $s_q$ for odd number $q$ \cite{BMM} which can be used to estimate the number of separated  $q$-element subsets of $C_2$. As a result, we are able to improve $R_q$ for large odd $q$.
 Recently, this combinatorial problem is further extended to abelian group $G$ with $\sum_{x\in G}x=0$ \cite{Ebe2017}.
 In fact, an even stronger result was proved which holds for $(\pi_1,\pi_2)$ of bijections such that $\pi_1+\pi_2+f$ is a bijection for some function $f: [q]\rightarrow \ZZ_q$ with $\sum_{i=1}^{q}f(i)=\sum_{x\in G}x$.
 Due to this result, we can also extend our result to improve $R_q$ for every large $q$.

 We further extend this $[3,2]$-MDS code result to a $[4,2]$-MDS code. It turns out that an $[4,2]$-MDS code over an abelian group of size $q$ could lead to an even better lower bound on $R_q$. Our main result is summarized below.
\begin{theorem}
For every integer $q$ with $q\neq 2\pmod 4$, one has a lower bound
$$
R_q\ge-\frac{1}{4(q-1)}\log_2\bigg(\left(1-\frac{q!}{q^q}\right)^4-\left(\frac{3q}{\sqrt{e}}+o(q)\right)\left(\frac{q!}{q^q}\right)^3\bigg).
$$
For every integer $q$ with $q=2\pmod 4$, one has a lower bound
$$
R_q\ge-\frac{1}{3(q-1)}\log_2\bigg(\left(1-\frac{q!}{q^q}\right)^3-\left(\frac{q}{2\sqrt{e}}+o(q)\right)\left(\frac{(q!)}{q^{q}}\right)^3\bigg).
$$
This rate outperforms  the probabilistic lower bound, $R_{q}\ge-\frac{1}{(q-1)}\log_2(1-\frac{q!}{q^q})$, for all sufficiently large $q$.
\end{theorem}

We note that the numerical results imply that the same construction also beat the probabilistic lower bound for small $q$. This leads to the following conjecture.
\begin{con}
For every integer $q$, there exists a perfect $q$-hash code beating the probabilistic lower bound. Moreover, such construction can be obtained via a concatenation code defined in
Theorem \ref{thm:3.13}, Theorem \ref{thm:3.14} and Theorem \ref{thm:newbound}.
\end{con}



This paper is organized as follows. In Section 2, we propose a new concatenation technique and derive a lower bound on $R_q$ in terms of the number of separated $q$-element subsets of the inner code. In Section 3, we provide several candidates for the inner code of our concatenation technique and estimate the number of separated $q$-element subsets for these candidates. By plugging this number into the lower bound in Section 2, we manage to prove that the probabilistic lower bound on $R_q$ with $q\neq 2\pmod 4$ can be improved in many cases.
In Section 4, we provide another candidate that can beat the probabilistic lower bound for $q=2 \pmod 4$. In Section 5, we provide a construction that is not based on linear code which can further improve the lower bound on $R_5$ and $R_7$.

\section{$\mA$-friendly codes and concatenation}
\subsection{Hash code}
A set containing $q$ elements is called a $q$-element set.
Assume that $m\ge q$, then a $q$-element subset $\{\bc_1,\bc_2,\dots,\bc_q\}$ of $[m]^N$ is called separated if there exists $i\in[N]$ such that $\proj_i(\bc_1),\dots,\proj_i(\bc_q)$ are pairwise distinct. 
If $q$ is a prime power, we denote by $\F_q$ the finite field with $q$ elements and let $\ZZ_m:=\ZZ/m\ZZ$ be the group of integers modulo $m$.

A subset $C$ of $[m]^N$ is called an $m$-ary code of length $N$. For an integer $q\le m$, an $m$-ary code $C$ of length $N$ is called an $m$-ary
$q$-hash code if every $q$-element subset of $C$ is separated. In particular, we say that $C$ is a perfect $q$-hash code if $m=q$.

We generalize the notion of $m$-ary
$q$-hash codes. Let $\binom{[m]}{q}$ denote the collection of  all $q$-element subsets of $[m]$. Let $\mathcal{A}$ be a subset of $\binom{[m]}{q}$ and let $C$ be a code in $[m]^N$. We say that a $q$-element subset $\{\bc_1,\ldots,\bc_q\}$ of $[m]^N$ is $\mathcal{A}$-friendly if there exists $i\in[N]$ such that
$\{\proj_i(\bc_1),\proj_i(\bc_2),\dots,\proj_i(\bc_q)\}\in \mathcal{A}$. Otherwise, we call $\{\bc_1,\ldots,\bc_q\}$ an $\mathcal{A}$-unfriendly subset. If every $q$-element subset of $C$ is $\mathcal{A}$-friendly, we say that $C$ is an $\mathcal{A}$-friendly code. In particular,
this definition coincides with an $m$-ary  $q$-hash code when $\mathcal{A}=\binom{[m]}{q}$.
\subsection{Random $\mathcal{A}$-friendly codes}
In this subsection, by applying a probabilistic argument, we prove the existence of $\mathcal{A}$-friendly codes.
\begin{lemma}\label{list-recovery-con}
Let $\mathcal{A}$ be a nonempty subset of $\binom{[m]}{q}$. Then there exists an $m$-ary $\mathcal{A}$-friendly code $C$ of length $N$ and size at least $\left\lceil\frac{M}{3}\right\rceil$ as long as
\begin{equation}\label{eq:1}
{M\choose q}\left(1-\frac{q!|\mathcal{A}|}{m^q}\right)^N\leq \frac{M}{2q}.\end{equation}
for fixed $q,m,|\mathcal{A}|$.
\end{lemma}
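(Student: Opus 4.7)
The plan is to use the classical probabilistic/alteration method, in the same spirit as the Fredman--Koml\'os bound for ordinary perfect hash codes, but with the event ``separated at coordinate $i$'' replaced by the weaker event ``the set of projections at coordinate $i$ lies in $\mA$''. Concretely, I would sample $M$ codewords $\bc_1,\ldots,\bc_M$ independently and uniformly at random from $[m]^N$, bound the expected number of $\mA$-unfriendly $q$-subsets, and then delete a single codeword from each such subset.

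The heart of the argument is a probability calculation at a single coordinate. Fix a $q$-element index set $\{i_1,\ldots,i_q\}\subseteq [M]$ and a coordinate $j\in[N]$. Since the codewords are independent and uniform, the tuple $(\proj_j(\bc_{i_1}),\ldots,\proj_j(\bc_{i_q}))$ is uniformly distributed over $[m]^q$. Each element $S\in\mA$ is the underlying set of exactly $q!$ ordered tuples, so
\[
\Pr\!\left[\{\proj_j(\bc_{i_1}),\ldots,\proj_j(\bc_{i_q})\}\in\mA\right]=\frac{q!\,|\mA|}{m^q}.
\]
Independence across the $N$ coordinates then gives
\[
\Pr\!\left[\{\bc_{i_1},\ldots,\bc_{i_q}\}\text{ is }\mA\text{-unfriendly}\right]=\left(1-\frac{q!\,|\mA|}{m^q}\right)^{\!N}.
\]
By linearity of expectation, the expected number $X$ of $\mA$-unfriendly $q$-subsets is at most $\binom{M}{q}(1-q!|\mA|/m^q)^N$, and by the hypothesis \eqref{eq:1} this is $\le M/(2q)$. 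Hence some realization has $X\le M/(2q)$; fix such a realization and delete one codeword from each $\mA$-unfriendly $q$-subset. The surviving subset is $\mA$-friendly and has cardinality at least $M-M/(2q)\ge M/2$.

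The only remaining nuisance is that sampling with replacement might produce repeated codewords, while an $\mA$-friendly \emph{code} is a set of distinct elements. I would handle this by a second alteration step: also remove one element from each coincident pair. The expected number of such coincidences is at most $\binom{M}{2}/m^N$, which is negligible whenever \eqref{eq:1} is non-vacuous, and in any case the gap between the bound $M-M/(2q)$ and the claimed $\lceil M/3\rceil$ is wide enough to absorb this loss for every $q\ge 2$. The main obstacle is simply the bookkeeping around duplicates and the Markov-step constant; no substantive difficulty is expected, since the argument is essentially the standard deletion method adapted to the generalized notion of $\mA$-friendliness.
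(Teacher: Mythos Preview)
Your proposal is correct and follows essentially the same probabilistic alteration argument as the paper: sample $M$ uniform codewords, compute the single-coordinate probability $q!|\mA|/m^q$, use independence across coordinates and linearity of expectation to bound the expected number of $\mA$-unfriendly $q$-subsets by $M/(2q)$, then delete offending codewords and account separately for collisions using $\binom{M}{2}/m^N=o(M)$. The only cosmetic difference is that you delete one codeword per bad $q$-subset (losing at most $M/(2q)$) whereas the paper deletes all $q$ codewords in each bad subset (losing at most $M/2$); both comfortably leave at least $\lceil M/3\rceil$ after the collision clean-up.
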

\begin{proof}
From \eqref{eq:1}, it is clear that $M\leq m^{(1-\epsilon)N}$ for some constant $\epsilon$ when $N$ is large enough.
We sample $M$ codewords $\bc_1,\ldots,\bc_M$ uniformly at random in $[m]^N$ with replacement.
The number of collisions is at most $M/6$. To see this, let $X_{i,j}$ be the $0,1$-random variable such that
$X_{i,j}=1$ if $\bc_i=\bc_j$ and $X_{i,j}=0$ otherwise. It is clear $P[X_{i,j}=1]=m^{-N}$.
It follows that $E[\sum_{1\leq i<j\leq M}X_{i,j}]=\binom{M}{2}m^{-N}\leq M/6$ due to the fact
that $M\leq m^{(1-\epsilon)N}$. Next, we bound the number of $q$-element sets from these $M$ codewords that are not $\mathcal{A}$-friendly.
Let us fix a $q$-element set $\{\bc_1,\ldots,\bc_q\}$ with $\bc_i=(c_{i,1},\ldots,c_{i,N})$.
For any $j\in [n]$, the probability that $\{c_{1,j},\ldots,c_{q,j}\}\in \mathcal{A}$ is $\frac{q!|\mathcal{A}|}{m^q}$
as $c_{i,j}$ is picked uniformly at random in $[m]$.
It follows that the probability that
$\{\bc_1,\ldots,\bc_q\}$ is $\mathcal{A}$-unfriendly is $(1-\frac{q!|A|}{m^q})^N$. There are at most $\binom{M}{q}$
$q$-element sets from $\{\bc_1,\ldots,\bc_M\}$. By union bound, the expected number of $\mathcal{A}$-unfriendly $q$-element sets
is at most ${M\choose q}\left(1-\frac{q!|\mathcal{\mathcal{A}}|}{m^q}\right)^N\leq \frac{M}{2q}$. Remove all the codewords that lie in
any of these $\mathcal{A}$-unfriendly $q$-element sets. Then, we remove at most $q\times \frac{M}{2q}=\frac{M}{2}$ codewords. According to our previous argument, there are at most $M/6$ collisions among these $M$ codewords. Remove these $M/6$ codewords and we obtain the $\mathcal{A}$-friendly code of size at least $\frac{M}{3}$. The desired result follows.
\end{proof}
\begin{rmk}
Note that in~\cite{KM}, the set $\mA$ is the collection of all $q$-element subsets of $[m]$. Thus, our random argument can be viewed as
a generalization of the argument in \cite{KM}. This generalization allows us to relax the constraint on
our inner code $C_1$, i.e., $C_1$ is not necessary a perfect $q$-hash code at a cost of imposing a stronger constraint on the outer code.
That is, instead of requiring that $C_1$ is a perfect $q$-hash code, we only require that a fraction $|\mathcal{A}|/\binom{m}{q}$ of $q$-element sets of $C_1$ are separated.
\end{rmk}

If we choose $m=q$ in Lemma \ref{list-recovery-con}, then $|\mA|=1$. We obtain a random construction of perfect $q$-hash codes.
\begin{cor} Let $q\ge 2$.
Then there exists $q$-hash code of length $N$ and size at least $\left\lceil\frac{M}{3}\right\rceil$ as long as
\begin{equation}\label{eq:2}
{M\choose q}\left(1-\frac{q!}{q^q}\right)^N\leq \frac{M}{2q}.
\end{equation}
In particular, we have a random $q$-hash code with rate
\begin{equation}\label{eq:random}
R=\frac{\log_2 M}{N}=-\frac{1}{q-1}\log_2\left(1-\frac{q!}{q^q}\right)+\frac{O(1)}{N}.
\end{equation}
Hence, we have a probabilistic lower bound
\begin{equation}\label{eq:3} R_q\ge \frac{1}{q-1}\log_2\left(\frac1{1-{q!}/{q^q}}\right).\end{equation}
\end{cor}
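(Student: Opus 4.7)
The plan is to derive the corollary directly from Lemma~\ref{list-recovery-con} by specializing to $m=q$. Observe that when $m=q$ the collection $\binom{[q]}{q}$ consists of the single element $[q]$, so the only nonempty choice is $\mathcal{A}=\{[q]\}$ with $|\mathcal{A}|=1$. Under this choice, being $\mathcal{A}$-friendly unwinds to the definition of a perfect $q$-hash code: every $q$-element subset of $C$ must have some coordinate at which its projections realize all $q$ values of $[q]$. Plugging $|\mathcal{A}|=1$ and $m=q$ into inequality~(\ref{eq:1}) gives exactly~(\ref{eq:2}), and the existence statement follows without further work.

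For the asymptotic rate, the strategy is to choose $M$ as large as possible subject to~(\ref{eq:2}). Using the crude upper bound $\binom{M}{q}\le M^q/q!$, inequality~(\ref{eq:2}) is implied by
$$
M^{q-1}\left(1-\frac{q!}{q^q}\right)^N \le \frac{q!}{2q}.
$$
Setting $M=\bigl\lfloor C_q\cdot (1-q!/q^q)^{-N/(q-1)}\bigr\rfloor$ with $C_q=(q!/(2q))^{1/(q-1)}$ then satisfies the hypothesis of Lemma~\ref{list-recovery-con} for all sufficiently large $N$, so we obtain a perfect $q$-hash code of length $N$ and size at least $\lceil M/3\rceil$. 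Taking $\log_2$ and dividing by $N$ yields~(\ref{eq:random}), and passing to the $\limsup$ yields the bound~(\ref{eq:3}).

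I do not expect any genuine obstacle in this argument, since the corollary is really a direct specialization of the lemma together with a one-line parameter-tuning computation. The only care points are standard and all collapse into the $O(1)/N$ term: the $\lceil\cdot/3\rceil$ loss and the $o(M)$ collisions discarded in the proof of Lemma~\ref{list-recovery-con} are both subexponential in $N$, and the crude estimate $\binom{M}{q}\le M^q/q!$ loses only a constant factor (the ratio tends to $1$ as $M\to\infty$). Hence the leading coefficient $-\frac{1}{q-1}\log_2(1-q!/q^q)$ is preserved, which is precisely the probabilistic lower bound claimed for $R_q$.
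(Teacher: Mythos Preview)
Your proposal is correct and follows essentially the same approach as the paper: specialize Lemma~\ref{list-recovery-con} to $m=q$ and $\mathcal{A}=\{[q]\}$, then use the crude bound $\binom{M}{q}\le M^q/q!$ to turn~(\ref{eq:2}) into an inequality for $M^{q-1}$, and choose $M$ maximally to read off the asymptotic rate. The only cosmetic difference is that you write down an explicit formula for $M$ while the paper simply says ``choose $M$ to be the largest integer satisfying~\eqref{eq:y1}''; the resulting computation of $R$ is identical.
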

\begin{proof} As ${M\choose q}\le\frac{M^q}{q!}$, the inequality
\begin{equation}\label{eq:y1} \frac{M^q}{q!}\left(1-\frac{q!}{q^q}\right)^N\leq \frac{M}{2q}\end{equation}
implies the inequality \eqref{eq:2}.
 Choose $M$ to be the largest integer satisfying the inequality \eqref{eq:y1} and consider the limit $\lim_{N\rightarrow\infty}\frac{\log_2 M}N$.
The desired equality \eqref{eq:random} follows.
\end{proof}
\subsection{A concatenation technique}
Let $C$ be a $q$-ary code of length $n$ and size $m$. Denote by $\mS(C)$ the collection of all $q$-element subsets of $C$ that are separated.

\begin{lemma}\label{lm:concatenation}
The following holds
\begin{equation}\label{eq:rate}
R_q\ge -\frac{1}{(q-1)n}\log_2\left(1-\frac{q!|\mS(C)|}{m^q}\right).
\end{equation}
\end{lemma}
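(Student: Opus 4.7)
The plan is a standard code-concatenation argument, but with the outer code chosen to be $\mA$-friendly (in the sense of Lemma \ref{list-recovery-con}) rather than being a full $m$-ary $q$-hash code. First I would identify the alphabet $[m]$ with the codewords of $C$ via some bijection $\phi:[m]\to C$. Using this bijection, each separated $q$-element subset $T\in\mS(C)$ pulls back to a $q$-element subset $\phi^{-1}(T)\in\binom{[m]}{q}$, and I set
\[
\mA \;:=\;\{\,\phi^{-1}(T)\,:\,T\in\mS(C)\,\}\subseteq\binom{[m]}{q},\qquad |\mA|=|\mS(C)|.
\]

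Next, I would invoke Lemma \ref{list-recovery-con} with this $\mA$ to obtain an $m$-ary $\mA$-friendly outer code $C_1\subseteq[m]^N$ of size $\left\lceil M/3\right\rceil$, where $M$ is the largest integer with $\binom{M}{q}(1-q!|\mS(C)|/m^q)^N\le M/(2q)$. I then form the concatenated code $C'\subseteq[q]^{nN}$ by replacing each symbol $a\in[m]$ occurring in any codeword of $C_1$ with the $q$-ary word $\phi(a)\in C\subseteq[q]^n$. Clearly $|C'|=|C_1|$ and the length is $nN$.

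The key claim is that $C'$ is a perfect $q$-hash code. To verify this, take any $q$-element subset $\{\bc'_1,\dots,\bc'_q\}\subseteq C'$ and let $\{\bc_1,\dots,\bc_q\}\subseteq C_1$ be the corresponding outer codewords (these are distinct since $\phi$ is injective). Because $C_1$ is $\mA$-friendly, there is a coordinate $i\in[N]$ with $\{\proj_i(\bc_1),\dots,\proj_i(\bc_q)\}\in\mA$; by the definition of $\mA$, the inner codewords $\{\phi(\proj_i(\bc_1)),\dots,\phi(\proj_i(\bc_q))\}\subseteq C$ form a separated $q$-subset, so they differ pairwise in some inner coordinate $j\in[n]$. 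Then the $((i-1)n+j)$-th coordinates of $\bc'_1,\dots,\bc'_q$ are pairwise distinct, proving separation.

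Finally, I would compute the rate. Using $\binom{M}{q}\le M^q/q!$, the choice of $M$ forces $M^{q-1}(1-q!|\mS(C)|/m^q)^N \le (q-1)!/2$, hence
\[
\log_2 M \;\ge\; -\frac{N}{q-1}\log_2\!\left(1-\frac{q!|\mS(C)|}{m^q}\right)-O(1).
\]
Since $|C'|\ge M/3$ has length $nN$, dividing by $nN$ and letting $N\to\infty$ yields the bound \eqref{eq:rate}. The only real subtlety is ensuring separation survives the concatenation: this is exactly where the definition of $\mA$ as the pullback of $\mS(C)$ is tailored so that an $\mA$-friendly outer coordinate always supplies an inner coordinate that separates. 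There are no hard estimates here; the argument is routine once the bijection $\phi$ and the set $\mA$ are set up correctly.
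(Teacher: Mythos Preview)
Your proposal is correct and follows essentially the same approach as the paper: identify $[m]$ with $C$ via a bijection, pull $\mS(C)$ back to define $\mA\subseteq\binom{[m]}{q}$, invoke Lemma~\ref{list-recovery-con} to get an $\mA$-friendly outer code, concatenate, and verify that $\mA$-friendliness at an outer coordinate yields separation at some inner coordinate. One small wording slip: from $\binom{M}{q}\le M^q/q!$ you should use the simpler inequality $M^{q-1}(1-q!|\mS(C)|/m^q)^N\le (q-1)!/2$ as a \emph{sufficient} condition (so the largest $M$ satisfying the original constraint is at least the largest satisfying this one), and then the fact that $M+1$ violates it gives the claimed lower bound on $\log_2 M$; as written, you derived an upper bound where a lower bound is needed, but the fix is immediate.
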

\begin{proof}
Let $\pi$ be any bijection from $C$ to $[m]$.
Define $\mA:=\bigcup_{\{\bc_1,\ldots,\bc_q\}\in \mS(C)}\bigg\{\{\pi(\bc_1),\ldots,$ $\pi(\bc_q)\}\bigg\}$.
It is clear that $\mA\subseteq{[m]\choose q}$  and $|\mA|=|\mS(C)|$.
Lemma \ref{list-recovery-con} tells us that there exists an $m$-ary $\mA$-friendly code $C_1$ of length $n_1$ with rate
$$
R=-\frac{1}{(q-1)}\log_2\left(1-\frac{q!|\mA|}{m^q}\right)+\frac{O(1)}{n_1}.
$$
Let $C_2$ be the concatenation of $C_1$ with $C$, i.e.,
\[C_2:=\{\pi^{-1}(\bc)=(\pi^{-1}(c_1),\pi^{-1}(c_2),\dots,\pi^{-1}(c_{n_1})):\; \bc=(c_1,c_2,\dots,c_{n_1})\in C_1\}.\]
Clearly, the rate of $C_2$ is $R=-\frac{1}{n(q-1)}\log_2(1-\frac{q!|\mA|}{m^q})+\frac{O(1)}{n_1n_2}.$
It remains to show that $C_2$ is a perfect $q$-hash code.

Choose any $q$-element subset $\{\pi^{-1}(\bc_1),\pi^{-1}(\bc_2),\dots,\pi^{-1}(\bc_q)\}$ from $C_2$ with $\{\bc_1,\bc_2,\dots,\bc_q\}$ being a $q$-element subset of $C_1$. Since $C_1$ is $\mA$-friendly, there exists $i\in[N]$ such that $\{\proj_i(\bc_1), $ $\proj_i(\bc_2),\dots,\proj_i(\bc_q)\}\in \mA$.
This implies that $\{\pi^{-1}(\proj_i(\bc_1)),\ldots,\pi^{-1}(\proj_i(\bc_q))\}\in \mS(C)$ and thus $\{\pi^{-1}(\bc_1),\pi^{-1}(\bc_2),\dots,\pi^{-1}(\bc_q)\}$ is separated. The desired result follows from the definition of perfect $q$-hash codes.
\end{proof}

\begin{rmk}
Given a $q$-ary code $C$ of length $n$, Lemma \ref{lm:concatenation} tells us there must exist an outer code whose concatenation with $C$ yields a perfect $q$-hash code with rate  $-\frac{1}{n(q-1)}\log_2(1-\frac{q!|\mS(C_2)|}{m^q})$.
That means we only need to focus on finding good inner codes $C$ with large subset $\mS(C)$. In what follows, when we talk about concatenation, we only specify the inner code. The outer code is always given by Lemma \ref{lm:concatenation}.

\end{rmk}

\section{Lower bounds from MDS codes}

By Lemma \ref{lm:concatenation}, to have a good lower bound on $R_q$, one needs to find  a $q$-ary inner code $C$ of length $n$ such that $\mS(C)$ has large size for fixed $q$, $n$ and size $|C|$.
However, determining (or even estimating) the size of $\mS(C)$ for a given inner code $C$ with dimension at least $2$ seems very difficult. In this section, we estimate the size of $\mS(C)$ for some classes of codes and show that these inner codes give lower bounds on $R_q$ better than the probabilistic lower bound \eqref{eq:3}.

In this subsection, we investigate a promising candidate for the inner code, i.e., MDS code.
In general, the MDS code is defined over finite field. However, it is possible to define an MDS code over an abelian group as well.
The reason why we use abelian group instead of finite field is that we want our construction of a $q$-perfect hash code to exist for any
$q$ instead of merely prime power.
Let $G$ be an abelian group with $q$ elements and $G^n=G\times G\times\cdots\times G$.
Let $\bc=(c_1,\ldots,c_n)\in G^n$ and denote by $(\bc)_T=(c_i)_{i\in I}$ the codeword $\bc$ restricted to index set $T\subseteq [n]$. There are several equivalent definition for MDS codes. We use the following definition in our convenience.


\begin{defn}\label{defn:mds}
Let $G$ be an abelian group with $q$ elements. Let $C\subseteq G^n$ be a subset of size $q^k$.
Then, $C$ is a $[n,k]$-MDS code if and only if for any subset $T\subset [n]$ of size at most $k$ and any $\bx\in G^k$,
the set $\{\bc\in C:(\bc)_T=\bx\}$ is of size $q^{k-|T|}$.
\end{defn}
For each $i\in[n]$, define the set
\begin{equation}\label{eq:x12}\mA_i=\{\{\bc_1,\bc_2,\dots,\bc_q\}\subseteq C:\; \{\proj_i(\bc_1),\ldots,\proj_i(\bc_q)\}=G\}.\end{equation}
Thus, we have $\mS(C)=\cup_{i=1}^n\mA_i$.
For any subset $T$ of $[n]$, we denote by $\mA_T$ the set $\cap_{i\in T}\mA_i$. Let $B_i$ denote the number \begin{equation}\label{eq:x1} B_i= \sum_{T\subseteq [n], |T|=i} |\mA_T|.\end{equation}

\begin{lemma}\label{lm:inclusion-exclusion}  Let $C$ be a $[n,k]$-MDS code over abelian group $G$. Then
\begin{equation}\label{eq:8}
|\mS(C)|= \sum_{i=1}^{k}(-1)^{i-1}{n\choose i}q^{q(k-i)}(q!)^{i-1}+\sum_{i=k}^n(-1)^{i-1}B_i.
\end{equation}
\end{lemma}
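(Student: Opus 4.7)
The plan is to apply inclusion–exclusion to the decomposition $\mS(C)=\bigcup_{i=1}^n \mA_i$ and then evaluate the inner sums of size at most $d^{\perp}-1$ in closed form using the dual-distance hypothesis, leaving the remaining terms packaged in the $A_i$ notation. Standard inclusion–exclusion combined with the definition \eqref{eq:x1} gives
$$|\mS(C)| \;=\; \sum_{i=1}^n (-1)^{i-1} \sum_{\substack{T\subseteq [n]\\ |T|=i}} |\mA_T| \;=\; \sum_{i=1}^n (-1)^{i-1} A_i,$$
so it suffices to show that $|\mA_T|=(q!)^{i-1}\, q^{q(k-i)}$ for every $T$ with $|T|=i\le d^{\perp}-1$.

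Fix such a $T$. By Lemma \ref{lem:3.1}, $T$ is contained in an information set $I$, so $\proj_I\colon C\to G^I$ is a bijection; composing with the coordinate projection $G^I\to G^T$ then shows that $\proj_T\colon C\to G^T$ is surjective with every fiber of size exactly $q^{k-i}$. I will first count ordered $q$-tuples $(\bc_1,\ldots,\bc_q)\in C^q$ such that for every $j\in T$ the values $\proj_j(\bc_1),\ldots,\proj_j(\bc_q)$ form a permutation of $G$. Because $G^T=\prod_{j\in T}G$ and the fibers of $\proj_T$ are uniform of size $q^{k-i}$, the image tuple $(\proj_T(\bc_1),\ldots,\proj_T(\bc_q))$ is uniform over $(G^T)^q$ and the permutation conditions at different coordinates $j\in T$ are independent. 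Hence the number of admissible image tuples is $(q!)^i$, and each one lifts to $(q^{k-i})^q=q^{q(k-i)}$ ordered preimage tuples in $C^q$. Moreover, any such tuple automatically consists of pairwise distinct codewords, since for any (hence some) $j\in T$ the values $\proj_j(\bc_\ell)$ are pairwise distinct. Dividing by $q!$ to pass from ordered tuples to unordered $q$-element subsets yields
$$|\mA_T|\;=\;\frac{(q!)^i\cdot q^{q(k-i)}}{q!}\;=\;(q!)^{i-1}\, q^{q(k-i)},$$
independently of $T$.

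Summing over the $\binom{n}{i}$ choices of $T$ gives $A_i=\binom{n}{i}q^{q(k-i)}(q!)^{i-1}$ for each $i\le d^{\perp}-1$, and leaving the remaining indices $i\ge d^{\perp}$ as $\sum_{i=d^{\perp}}^n(-1)^{i-1}A_i$ produces the two summands in \eqref{eq:8}. The main obstacle is the uniformity-of-projection step: justifying that $\proj_T$ is a surjective $q^{k-i}$-to-$1$ map when $|T|\le d^{\perp}-1$, which is precisely what Lemma \ref{lem:3.1} supplies. Without it, the constraints ``$\{\proj_j(\bc_1),\ldots,\proj_j(\bc_q)\}=G$'' at different $j\in T$ could be correlated and would not give the clean product count $(q!)^i$; everything else is bookkeeping with inclusion–exclusion and factorials.
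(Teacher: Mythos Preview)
Your proof is correct and follows essentially the same approach as the paper: inclusion--exclusion on $\mS(C)=\bigcup_i\mA_i$, followed by computing $|\mA_T|$ for $|T|\le d^\perp-1$ via Lemma~\ref{lem:3.1} by counting ordered $q$-tuples through an information set and dividing by $q!$. Your write-up is slightly more careful in one respect: you explicitly note that the ordered tuples counted have pairwise distinct entries (since they are permutations at any coordinate of $T$), which justifies the division by $q!$; the paper takes this step for granted.
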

\begin{proof}
First we claim that for any $j\leq k$ and subset $J\subseteq [n]$ with $|J|=j$, we have $|\mA_J|=q^{q(k-j)}(q!)^{j-1}$.
Note that if $\{\bc_1,\ldots,\bc_q\}\in \mA_J$, then $\{\proj_i(\bc_1),\ldots,\proj_i(\bc_q)\}=G$ for any $i\in J$.
This means the matrix
$$
M=\left(\begin{matrix}(\bc_1)_J\\ (\bc_2)_J\\ \vdots\\ (\bc_q)_J\end{matrix}\right)
$$
satisfies that each column of $M$ is a permutation of all elements in $G$. There are $(q!)^{j}$ such matrix $M$. Let us fix $M$ and denote by $\by_1,\ldots,\by_q$ the $q$ rows of $M$. Since $C$ is a MDS code of size $q^k$, Definition \ref{defn:mds} says that there are
$q^{k-j}$ codewords $\bc_i$ in $C$ with $(\bc_i)_J=\by_i$. This gives $(q!)^jq^{q(k-j)}$ different $q$-tuples $(\bc_1,\bc_2,\dots,\bc_q)$ with $\{\bc_1,\bc_2,\dots,\bc_q\}\in \mA_J$. It follows that the number of $q$-element sets in $\mA_J$ is  $(q!)^{j-1}q^{q(k-j)}$.

By the inclusion-exclusion principle, we have
\[|\mS(C)|=|\bigcup_{i=1}^{n}\mA_i|= \sum_{i=1}^{k}(-1)^{i-1}{n\choose i}q^{q(k-i)}(q!)^{i-1}+\sum_{i=k+1}^n(-1)^{i-1}B_i.\]
This completes the proof.
\end{proof}

By the equality \eqref{eq:8}, we have
\begin{eqnarray*}
|\mS(C)|&=& \sum_{i=1}^n(-1)^{i-1}{n\choose i}q^{q(k-i)}(q!)^{i-1}-\sum_{i=k+1}^n(-1)^{i-1}{n\choose i}q^{q(k-i)}(q!)^{i-1}+\sum_{i=k+1}^n(-1)^{i-1}B_i\\
&=&\frac{-q^{qk}}{q!q^{qn}}\left(-q^{qn}+(q^q-q!)^n\right)-\sum_{i=k+1}^n(-1)^{i-1}{n\choose i}q^{q(k-i)}(q!)^{i-1}+\sum_{i=k+1}^n(-1)^{i-1}B_i\\
&=&\frac{q^{qk}}{q!}\left(1-\left(1-\frac{q!}{q^q}\right)^n\right)-\sum_{i=k+1}^n(-1)^{i-1}{n\choose i}q^{q(k-i)}(q!)^{i-1}+\sum_{i=k+1}^n(-1)^{i-1}B_i.
\end{eqnarray*}
Thus, we have
\[1-\frac{q!|\mS(C)|}{q^{qk}}= \left(1-\frac{q!}{q^q}\right)^n+\sum_{i=k+1}^n(-1)^{i-1}{n\choose i}\left(\frac{q!}{q^q}\right)^{i}-\frac{q!}{q^{qk}}\sum_{i=k+1}^n(-1)^{i-1}B_i.\]
Hence, in order to beat  the probabilistic lower bound, we need to verify the following inequality for an $[n,k]$-MDS inner code $C$,
\begin{equation}\label{eq:10}
\sum_{i=k+1}^n(-1)^{i-1}{n\choose i}\left(\frac{q!}{q^q}\right)^{i}<\frac{q!}{q^{qk}}\sum_{i=k+1}^n(-1)^{i-1}B_i
\end{equation}

Lemma \ref{lm:inclusion-exclusion} shows that computing $|\mS(C)|$ is reduced to computing $B_i$ for $i=k+1,\dots, n$. However, if $k+1$ is too far from $n$, we have to compute many $B_i$ and this is rather difficult. The simplest case is $k=n-1$ where we need to compute only $A_n$. In this case, we use $[n,n-1]$-MDS code. Another gain for this choice is that the $[n,n-1]$-MDS code exists over any abelian group.
\begin{cor}\label{lm:code}
Let $q\ge 2$ be an integer and $G$ be an abelian group of order $q$. Define the $q$-ary MDS code $C=\{(x_1,\ldots,x_{n-1},\sum_{i=1}^{n-1}x_i):\; x_1,\ldots,x_{n-1}\in G\}.$ Let $A_n$ denote the cardinality of the set
\[\{\{\bc_1,\bc_2,\dots,\bc_q\}\subseteq C:\; \{\proj_i(\bc_1)),\ldots,(\proj_i(\bc_q))\}=G \ \mbox{for any $i\in[n]$}\}.\] Then
$|\mS(C)|= \frac{q^{q(n-1)}}{q!}\left(1-\left(1-\frac{q!}{q^q}\right)^n\right)-(-1)^{n-1}q^{-q}(q!)^{n-1}+(-1)^{n-1}A_n$.
\end{cor}
\begin{proof}
As $C$ is a $[n,n-1]$-MDS code, set $k=n-1$ in Lemma \ref{lm:inclusion-exclusion}.
\end{proof}
Combining \eqref{eq:10} and Corollary \ref{lm:code}, we obtain the following corollary.
\begin{cor}\label{cor:beat}
Let $q\ge 2$ be an integer and $A_n$ be the number given in Corollary \ref{lm:code}. If
\begin{equation}
(-1)^{n-1} A_n> (-1)^{n-1}\frac{(q!)^{n-1}}{q^q},
\end{equation}
Then there exist  families of perfect $q$-hash codes with rate better than the probabilistic lower bound \eqref{eq:3}.
\end{cor}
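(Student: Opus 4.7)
The plan is to feed the code $C$ of Lemma \ref{lm:code} into the concatenation machinery of Lemma \ref{lm:concatenation} and verify that the stated hypothesis on $A_n$ is exactly what is required for the resulting rate to beat the probabilistic lower bound \eqref{eq:3}. Since $|C|=q^{n-1}$, Lemma \ref{lm:concatenation} immediately produces a family of perfect $q$-hash codes with rate
\[
R \;\ge\; -\frac{1}{(q-1)n}\log_2\!\left(1 - \frac{q!\,|\mS(C)|}{q^{q(n-1)}}\right),
\]
so the task reduces to showing $1-q!\,|\mS(C)|/q^{q(n-1)} < (1-q!/q^q)^n$, which when plugged into the displayed formula yields a rate strictly larger than $-\tfrac{1}{q-1}\log_2(1-q!/q^q)$.

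The first step is a direct substitution: I will plug the closed form for $|\mS(C)|$ provided by Lemma \ref{lm:code} into $1-q!\,|\mS(C)|/q^{q(n-1)}$. After routine simplification this rearranges into
\[
\left(1-\frac{q!}{q^q}\right)^n + (-1)^{n-1}\frac{(q!)^n}{q^{qn}} - (-1)^{n-1}\frac{q!\,A_n}{q^{q(n-1)}}.
\]
The leading term matches the probabilistic benchmark exactly, so the proof reduces to checking that the remaining correction $(-1)^{n-1}\frac{q!}{q^{q(n-1)}}\!\left(\frac{(q!)^{n-1}}{q^q}-A_n\right)$ is strictly negative. Dividing through by the positive factor $q!/q^{q(n-1)}$ then converts this into the hypothesis $(-1)^{n-1}A_n > (-1)^{n-1}(q!)^{n-1}/q^q$ of the corollary, giving the conclusion.

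Since the argument is essentially a one-line algebraic comparison once Lemmas \ref{lm:concatenation} and \ref{lm:code} are in hand, there is no genuine analytical obstacle in the proof itself; the only care required is to track the alternating signs $(-1)^{n-1}$ and the factor $m^q=q^{q(n-1)}$ coming from $|C|=q^{n-1}$. The real difficulty — deferred to the later parts of the paper — is to estimate $A_n$ concretely for specific abelian groups $G$ and small values of $n$ (notably $n=3$ and $n=4$) with enough precision to verify the inequality, which ultimately reduces, via the setup sketched in the introduction, to counting tuples of bijections $[q]\to G$ whose coordinatewise sum is also a bijection.
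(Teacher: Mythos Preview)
Your proposal is correct and follows essentially the same approach as the paper: plug the formula for $|\mS(C)|$ from Lemma \ref{lm:code} into $1-q!\,|\mS(C)|/q^{q(n-1)}$, observe that the result equals $\left(1-\frac{q!}{q^q}\right)^n + (-1)^{n-1}\frac{q!}{q^{q(n-1)}}\left(\frac{(q!)^{n-1}}{q^q}-A_n\right)$, note that the hypothesis makes the correction term negative, and invoke Lemma \ref{lm:concatenation}. Your write-up is slightly more explicit about the sign tracking and the role of $m=|C|=q^{n-1}$, but the argument is identical in substance.
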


If $C$ is the code of length $3$ over $\ZZ_q$ in Corollary \ref{lm:code}, i.e, $C=\{(x,y,x+y):\; x,y\in\ZZ_q\}$, then determining $A_3$ given in  Corollary \ref{lm:code} is actually reduced to  the following well-known combinatorial problem: determining the number $s_q$ of pairs $(\pi_1,\pi_2)$ of bijections $[q]\rightarrow\ZZ_q$ such that $\pi_1+\pi_2$ is a bijection as well. The relation between $A_3$ and $s_q$ is $A_3=\frac{s_q}{q!}$.

The number $s_q$ has been studied somewhat extensively, but under a different guise \cite{B15,CGKN99,C00,V91,K07}. It is in general very difficult to determine the exact value of $s_q$ unless $q$ is an even number for which $s_q=0$.
To beat the probabilistic lower bound on $R_q$, we want to show $s_q> (\frac{q!^2}{q^q})$. That means,  we are only interested in the lower bounds on $s_q$. A generic lower bound is $s_q\ge 3.246^q\times q!$ for all odd $q$. However, there is still a very big gap between this lower bound and the aforementioned conjecture. For sufficiently large $q$, we actually has some asymptotically tight lower bound for $s_q$. We defer this discussion to the next subsection.
On the other hand, there are various algorithms to numerically approximate $s_q$ \cite{K07}. Precisely speaking, for many odd $q$ in the interval $[27,155]$, it is possible to approximate $s_q$ with certain accuracy. One can verify from these estimation that the probabilistic lower bound \eqref{eq:3} is improved for all odd integers $q$ in \cite{K07}.

By taking exact value of $s_q$ for all odd $q$ between $3$ and $25$ from \cite{K07}, we obtain the following result.
\begin{cor}\label{cor:x0}
There exists a family of perfect $q$-hash codes over $\ZZ_q$ with rate better than the probabilistic lower bound \eqref{eq:3} for all odd $q$ between $3$ and $25$. 
\end{cor}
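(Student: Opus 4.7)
The plan is to instantiate Corollary \ref{cor:beat} with the specific length-$3$ inner code $C=\{(x,y,x+y):x,y\in\ZZ_q\}$ over the abelian group $G=\ZZ_q$. Since $n-1=2$ is even, the sufficient condition $(-1)^{n-1}A_n>(-1)^{n-1}(q!)^{n-1}/q^q$ collapses to the single inequality
\begin{equation*}
A_3>\frac{(q!)^2}{q^q},
\end{equation*}
so the entire task reduces to verifying this inequality for each odd $q\in\{3,5,\ldots,25\}$.

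The next step is the combinatorial reduction to $s_q$ flagged in the discussion preceding the corollary. Writing each codeword as $\bc_j=(x_j,y_j,x_j+y_j)$, a $q$-element subset of $C$ has all three projections equal to $\ZZ_q$ iff $\{x_j\}$, $\{y_j\}$, and $\{x_j+y_j\}$ each exhaust $\ZZ_q$. Since the $x_j$ already form a permutation of $\ZZ_q$, such a subset is determined once we record, for every $a\in\ZZ_q$, the value $f(a)$ with $\bc_a=(a,f(a),a+f(a))$; here $f$ must be a bijection of $\ZZ_q$ such that $a\mapsto a+f(a)$ is also bijective, i.e.\ an orthomorphism of $\ZZ_q$. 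Since $s_q$ counts ordered pairs $(\pi_1,\pi_2)$ of bijections $[q]\to\ZZ_q$ with bijective sum, choosing $\pi_1$ freely (in $q!$ ways) and then $\pi_2=f\circ\pi_1$ yields $s_q=q!\cdot\#\{\text{orthomorphisms of }\ZZ_q\}$, so $A_3=s_q/q!$. The inequality to verify thus becomes
\begin{equation*}
s_q>\frac{(q!)^3}{q^q}.
\end{equation*}

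The final step is a purely numerical check: plug the exact values of $s_q$ tabulated by Kuznetsov in \cite{K07} into the displayed inequality for each of the twelve odd $q\in\{3,5,\ldots,25\}$ and confirm it case by case. I expect this to be the only real obstacle, but it is a bookkeeping obstacle rather than a conceptual one: both sides of the inequality grow very rapidly with $q$, so the check must be done with the precomputed values of $s_q$ (no closed-form lower bound known in the literature is strong enough to give $s_q>(q!)^3/q^q$ uniformly, which is precisely why the conclusion of the corollary is restricted to a finite range). Once all twelve inequalities are verified, Corollary \ref{cor:beat} immediately produces, for each such $q$, a family of perfect $q$-hash codes over $\ZZ_q$ whose rate strictly exceeds the probabilistic lower bound \eqref{eq:3}.
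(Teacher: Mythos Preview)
Your proposal is correct and follows essentially the same route as the paper's own proof: invoke Corollary~\ref{cor:beat} with $n=3$ and $G=\ZZ_q$, reduce to the inequality $A_3=s_q/q!>(q!)^2/q^q$, and then verify this numerically for each odd $q\in[3,25]$ using the exact values of $s_q$ from \cite{K07}. The only difference is that you spell out the orthomorphism bijection establishing $A_3=s_q/q!$, whereas the paper simply quotes this relation from the discussion preceding the corollary.
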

\begin{proof}
By Corollary \ref{cor:beat}, it is sufficient to verify the inequality
\begin{equation}\label{eq:x0} \frac{s_q}{q!}> \frac{(q!)^2}{q^q}\end{equation}
 for all odd $q$ between $3$ and $25$. Taking the values of $s_q$ from Table I of \cite{K07} gives the desired claim.
\end{proof}

\begin{rmk}
From Table \ref{fig:table1}, we observe that the ratio $A_3$ over
$\frac{(q!)^2}{q^q}$ grows slowly but monotonically. In fact, we will see that this ratio is asymptotically equal to $\frac{q}{\sqrt{e}}$ in the next section.

\begin{table}

\begin{center}

\begin{tabular}{||c |c|c|c|c|c|c||}
  \hline
  $\ZZ_q$  & $\ZZ_5$  & $\ZZ_7$  & $\ZZ_9$ &  $\ZZ_{11}$& $\ZZ_{13}$ &  $\ZZ_{15}$ \\ \hline
  $A_3$ & $15$ & $133$ & $2025$ & $37851$ & $1.03\times 10^6$ &  $3.63\times 10^7$   \\\hline
  $\frac{(q!)^2}{q^q}$ & $4.6$ & $30.8$ & $339.9$ & $5584.6$ & $1.28\times 10^5$ & $3.90\times 10^6$ \\ \hline
  Ratio & $3.26$ & $4.32$ & $5.96$ & $6.78$ & $8.04$ & $9.30$ \\ \hline\hline

  $\ZZ_q$ &$\ZZ_{17}$ & $\ZZ_{19}$ & $\ZZ_{21}$& $\ZZ_{23}$ & $\ZZ_{25}$ &\\ \hline
  $A_3$ & $1.60\times 10^9$ & $8.76\times 10^{10}$ & $5.77\times 10^{12}$  &  $4.52\times 10^{14}$ & $4.16\times 10^{16}$ &\\ \hline
  $\frac{(q!)^2}{q^q}$ & $1.52*10^8$ & $7.47\times 10^{9}$ & $4.47\times 10^{11}$ & $3.2\times 10^{13}$ & $2.70\times 10^{15}$ & \\ \hline
  Ratio & $10.53$ & $11.71$ & $12.93$ & $14.12$ & $15.4$ & \\ \hline

\end{tabular}
\end{center}
\caption{The comparison between $A_3$ and $\frac{(q!)^2}{q^q}$ for small odd $q$.}
\label{fig:table1}
 \end{table}
\end{rmk}


For even $q$, we have $s_q=0$. We turn to other abelian groups instead of $\ZZ_q$.
\begin{cor}\label{cor:xx0}
There exists a family of perfect $q$-hash code with rate better than the probabilistic lower bound \eqref{eq:3} for $q=4,8,9,12$.
\end{cor}
\begin{proof}
Let $C$ be a code with the form
$$
C=\{(x,y,x+y):x,y\in \F_q\}.
$$
for $q=4,8,9$.
When $q=12$, we let $C=\{(x,y,x+y):x,y\in \F_3\times\F_4\}$.
With the help of computer search, we present the values $A_3$ of $C$ in Table \ref{table:sm}.
\begin{table}[h]

\begin{center}

\begin{tabular}{||c |c|c|c|c|c|c||}
  \hline
  $q$  & $\F_4$  & $\F_8$  & $\F_9$ &  $\F_{3}\times \F_4$ \\ \hline
  $A_3$ & $8$ & $384$ & $2241$ & $198144$   \\\hline
   $\frac{(q!)^2}{q^q}$ & $2.25$ & $96.89$ & $339.9$ & $25733.5$  \\ \hline\hline
\end{tabular}
\end{center}
\caption{The comparison between $A_3$ and $\frac{(q!)^2}{q^q}$ for $q$.}\label{table:sm}

 \end{table}
\end{proof}

\begin{rmk}
The lower bound on $R_3$ given in \cite{KM} is $R_3\ge \frac 14\log_2\frac95$.
Let $C$ be a ternary $[4,2]$-MDS code. The computer search shows that $|\mS(C)|=84$. By  Lemma \ref{lm:concatenation}, we also obtain the same lower bound $R_3\ge \frac 14\log_2\frac95$.
\end{rmk}

This remark indicates that $q$-ary MDS codes of larger length may lead to a better lower bound on $R_q$ than $q$-ary [3,2]-MDS codes. This is further confirmed by the following example for $q=4$.

\begin{cor}
There exists a family of perfect $4$-hash code over $\F_4$ with rate at least $0.049586$. This is better than both the lower bound given in Corollary \ref{cor:xx0} and  the probabilistic lower bound.
\end{cor}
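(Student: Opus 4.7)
The plan is to instantiate Lemma~\ref{lm:concatenation} with a $4$-ary $[4,2]$-MDS inner code over $\F_4$, for example the extended Reed--Solomon code $C=\{(f(\beta_1),f(\beta_2),f(\beta_3),f(\beta_4)):f\in \F_4[x],\ \deg f<2\}$ where $\beta_1,\beta_2,\beta_3,\beta_4$ enumerate $\F_4$. Then $C$ has parameters $[4,2,3]$ with $m=|C|=16$ and dual distance $d^\perp=3$, so Lemma~\ref{lm:concatenation} yields
\[
R_4 \;\ge\; -\frac{1}{12}\log_2\!\Bigl(1-\tfrac{24\,|\mS(C)|}{16^4}\Bigr),
\]
and the task reduces to a concrete lower bound on $|\mS(C)|$. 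A direct check shows that producing rate $\ge 0.049586$ requires $|\mS(C)|\ge 921$ (approximately), so the goal is to establish an exact count of that order.

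To compute $|\mS(C)|$ cleanly, I would invoke Lemma~\ref{lm:inclusion-exclusion} with $(n,k,q,d^\perp)=(4,2,4,3)$. The low-order terms are fully determined by the MDS property and evaluate to
\[
\sum_{i=1}^{2}(-1)^{i-1}\binom{4}{i}4^{4(2-i)}(4!)^{i-1} \;=\; 4\cdot 256 \;-\; 6\cdot 24 \;=\; 880,
\]
so it remains to compute $A_3$ and $A_4$ as defined in \eqref{eq:x1}. Each $A_j$ counts $4$-element subsets $\{\bc_1,\ldots,\bc_4\}\subseteq C$ whose projections on $j$ specified coordinates are all bijections to $\F_4$; equivalently, via the information-set argument in Lemma~\ref{lm:inclusion-exclusion}, one enumerates $4\times 2$ matrices over $\F_4$ of a certain permutation type and reads off the values at the remaining coordinates. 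This is a small finite computation, tractable either by hand or by exhaustive computer search over $\binom{16}{4}=1820$ subsets. With $A_3$ and $A_4$ in hand one obtains
\[
|\mS(C)| \;=\; 880 + A_3 - A_4,
\]
and substitution into the displayed rate formula gives the stated numerical bound.

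The main (and only) obstacle is the honest evaluation of $A_3$ and $A_4$: because $d^\perp<n$, the contributions at $i=3,4$ are no longer determined purely by the MDS/uniformity structure and must be computed for the specific code. This is a routine brute-force enumeration rather than a theoretical difficulty; once carried out it also serves to confirm the remark preceding the corollary that inner MDS codes of length $>3$ can beat length-$3$ MDS inner codes on $R_q$, as one expects from the monotone improvement already observed for $q=3$ in the preceding remark. Finally, one checks that $0.049586$ strictly exceeds the $q=4$ rate from Corollary~\ref{cor:xx0} (based on the $[3,2]$-MDS code) and the probabilistic rate $\tfrac{1}{3}\log_2\!\bigl(1/(1-24/256)\bigr)$, establishing both assertions of the corollary.
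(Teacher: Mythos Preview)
Your plan has a genuine gap: a $[4,2]$-MDS inner code over $\F_4$ cannot reach the rate $0.049586$. With $(n,k,q,d^\perp)=(4,2,4,3)$ your low-order sum $880$ is correct, and by the same argument used for the $[3,2]$ code one has $|\mA_T|=s_{\F_4}/4!=8$ for every $3$-subset $T$ (each triple of coordinates is affinely equivalent to $(x,y,x+y)$), so $A_3=4\cdot 8=32$. Since $A_4=|\mA_{[4]}|\ge 0$, this forces
\[
|\mS(C)| \;=\; 880 + 32 - A_4 \;\le\; 912,
\]
whereas your own threshold calculation shows one needs $|\mS(C)|\ge 923$ to hit $0.049586$. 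Plugging $|\mS(C)|=912$ into Lemma~\ref{lm:concatenation} gives at most $-\tfrac{1}{12}\log_2(1-24\cdot 912/16^4)\approx 0.0489$. That still beats the probabilistic bound and the $[3,2]$ bound of Corollary~\ref{cor:xx0}, but it is strictly below the value asserted in the corollary.

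The paper instead takes the $[5,2]$-MDS code $C=\{(a,b,a+b,a\alpha+b,a(\alpha+1)+b):a,b\in\F_4\}$, so $m=16$, $n=5$. A computer search gives $|\mS(C)|=1100$, and then
\[
-\frac{1}{15}\log_2\!\Bigl(1-\frac{24\cdot 1100}{16^4}\Bigr)\;\approx\;0.049586.
\]
So the fix is simply to move to length $5$; your inclusion--exclusion framework is the right one, but the extra coordinate is what pushes the count past the threshold.
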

\begin{proof}
Assume $\F_4=\{0,1,\alpha,\alpha+1\}$.
Consider a $[5,2]$-MDS code:
$$
C=\{(a,b,a+b,a\alpha+b,a(\alpha+1)+b): a,b\in \F_4\}.
$$
By computer search, we find that
there are $1100$ out of $\binom{32}{4}$ $4$-element subsets of $C$ that are separated. Plugging it
parameters into Lemma \ref{lm:concatenation}, we obtain perfect $4$-hash code with rate $0.049586$.
\end{proof}

\section{A lower bound for big $q\neq 2\pmod 4$}

We need a lower bound on $s_q$. For big $q$ we can use a rather precise asymptotic estimate proved in \cite{BMM}.
 Their result settles a conjecture saying that, for all odd $q$, the number $s_q$ lies in between $c_1^n n!^2$ and $c_2^n n!^2$ for some constants $c_1,c_2$. This conjecture is recently confirmed in \cite{BMM}. They even close the gap by showing $c_1=c_2=\frac{1}{e}+o(1)$.

\begin{prop}[\cite{BMM}]\label{prop:key}
Let $q$ be an odd integer. Then, the number $s_q$ is  $(\frac{1}{\sqrt{e}}+o(1))\frac{q!^3}{q^{q-1}}$, and hence $A_3$ defined in Corollary \ref{lm:code}  is $(\frac{1}{\sqrt{e}}+o(1))\frac{q!^2}{q^{q-1}}$.
\end{prop}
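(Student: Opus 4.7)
The plan is to prove the asymptotic for $s_q$; the claim for $A_3$ follows from the identity $A_3=s_q/q!$ stated just before Proposition \ref{prop:key}. I would follow the Fourier / saddle-point strategy of \cite{BMM}. First, view $s_q$ as the number of additive triples of bijections: setting $\sigma_3=-(\pi_1+\pi_2)$, each pair $(\pi_1,\pi_2)$ counted by $s_q$ corresponds to a triple $(\sigma_1,\sigma_2,\sigma_3)$ of bijections $[q]\to \ZZ_q$ that sum pointwise to $0$, and conversely.

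Second, apply Fourier inversion on $\ZZ_q$. Expanding the indicator $\mathbf{1}[\sigma_1(x)+\sigma_2(x)+\sigma_3(x)=0]$ via characters $y\mapsto \omega^{a_x y}$ with $\omega=e^{2\pi i/q}$, and separating the three independent permutation sums, yields
\[
s_q \;=\; \frac{1}{q^q}\sum_{\ba\in \ZZ_q^q}\mathrm{per}(M(\ba))^{3},
\]
where $M(\ba)$ is the $q\times q$ matrix with $(x,y)$-entry $\omega^{a_x y}$. The trivial term $\ba=\mathbf{0}$ gives $M=J_q$ and contributes exactly the naive heuristic $(q!)^3/q^q$. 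Crucially, for each constant vector $\ba=(c,c,\ldots,c)$ with $c\in \ZZ_q$, the matrix $M(\ba)$ has columns independent of $x$, hence rank one, with permanent $q!\prod_y \omega^{cy}=q!$ by the identity $\sum_{y\in \ZZ_q}y\equiv 0\pmod q$ (valid because $q$ is odd). These $q$ \emph{main saddles} together contribute $q\cdot(q!)^3/q^q=(q!)^3/q^{q-1}$, accounting for the factor-$q$ improvement over the naive estimate.

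Third, carry out a local Laplace expansion about each of these $q$ saddles. Writing $\ba=(c,c,\ldots,c)+\mathbf{v}$ with $\mathbf{v}$ small and expanding $\log\mathrm{per}(M(\ba))$ to second order in $\mathbf{v}$, the sum over $\mathbf{v}$ approximates a Gaussian integral whose evaluation produces the constant $e^{-1/2}$ and yields the claimed asymptotic.

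The hardest step is the tail bound: one must show that vectors $\ba$ far from all $q$ saddles contribute $o((q!)^3/q^{q-1})$ in total. This requires uniform permanent estimates on $M(\ba)$, obtained via a combination of Br\'egman's inequality and $L^2$-orthogonality of rows of the Fourier matrix of $\ZZ_q$. Pinning down the precise constant $e^{-1/2}$, rather than merely the exponential order $(q!)^3/q^{q-1}$, demands both this uniform tail control and a careful second-order expansion throughout the saddle regions; this is the technical heart of \cite{BMM}.
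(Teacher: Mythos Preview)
The paper does not prove Proposition~\ref{prop:key}; it is quoted from \cite{BMM} as a black box. The only step the paper actually carries out is the sentence immediately preceding the proposition, observing that $A_3=s_q/q!$ (equivalently, the commented-out lemma derives this from the definition of $\mA_1\cap\mA_2\cap\mA_3$). So there is no proof in the paper for you to match.

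That said, your sketch is a faithful outline of the argument in \cite{BMM}: the permanent identity
\[
s_q=\frac{1}{q^q}\sum_{\ba\in\ZZ_q^q}\mathrm{per}(M(\ba))^3
\]
is correct, and the identification of the $q$ constant vectors $\ba=(c,\dots,c)$ as the main contributors, each with $\mathrm{per}(M(\ba))=q!$ because $\sum_{y\in\ZZ_q}y\equiv 0\pmod q$ for odd $q$, is exactly the mechanism that produces the leading term $(q!)^3/q^{q-1}$. One phrasing to tighten: you write ``$\mathbf{v}$ small'' for the local expansion, but $\ba$ lives in the discrete set $\ZZ_q^q$; in \cite{BMM} the analysis is organised around the number of coordinates of $\ba$ equal to the dominant value (or, equivalently, a contour shift in an integral representation), rather than a literal Taylor expansion in $\mathbf{v}$. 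The heavy lifting, as you correctly flag, is the uniform control of $|\mathrm{per}(M(\ba))|$ away from the constant vectors, and you are right that Br\'egman-type bounds and Fourier orthogonality are the tools used there. For the purposes of this paper, however, none of that is needed: you may simply cite \cite{BMM} for $s_q$ and invoke $A_3=s_q/q!$.
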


Plugging $A_3$ in Proposition \ref{prop:key} into \eqref{eq:8} and \eqref{eq:rate} gives the following theorem.

\begin{theorem}\label{thm:main}
For every odd integer $q$, one has
$$
R_q\ge-\frac{1}{3(q-1)}\log_2\bigg(1-3\frac{q!}{q^q}+3\frac{(q!)^2}{q^{2q}}-\left(\frac{1}{\sqrt{e}}+o(1)\right)\frac{(q!)^3}{q^{3q-1}}\bigg).
$$
Moreover, for every sufficiently large odd $q$ this rate is bigger than that given by the probabilistic lower bound.
\end{theorem}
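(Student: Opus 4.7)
The plan is to apply Lemma~\ref{lm:code} (equivalently Corollary~\ref{cor:beat}) with $n=3$ and $G=\ZZ_q$, so the inner code becomes $C=\{(x,y,x+y):x,y\in\ZZ_q\}$ of length $3$ and size $q^2$. The count $|\mS(C)|$ is then fed into Lemma~\ref{lm:concatenation}, and the asymptotic value of $A_3$ supplied by Proposition~\ref{prop:key} is substituted at the end.

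First, I would specialize Lemma~\ref{lm:code} to $n=3$. Since $(-1)^{n-1}=+1$ here,
\[
|\mS(C)| \;=\; \frac{q^{2q}}{q!}\Bigl(1-\bigl(1-\tfrac{q!}{q^q}\bigr)^3\Bigr) \;-\; \frac{(q!)^2}{q^q} \;+\; A_3.
\]
Dividing by $q^{2q}/q!$, expanding the cube by the binomial theorem, and observing that the $\tfrac{(q!)^3}{q^{3q}}$ produced by the expansion cancels against $-\tfrac{q!}{q^{2q}}\cdot\tfrac{(q!)^2}{q^q}$, one obtains
\[
1-\frac{q!\,|\mS(C)|}{q^{2q}} \;=\; 1-3\,\tfrac{q!}{q^q}+3\,\tfrac{(q!)^2}{q^{2q}}-\frac{q!\,A_3}{q^{2q}}.
\]

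Second, I would plug in $A_3=(\tfrac{1}{\sqrt{e}}+o(1))\tfrac{(q!)^2}{q^{q-1}}$ from Proposition~\ref{prop:key}, which rewrites $\tfrac{q!\,A_3}{q^{2q}}$ as $(\tfrac{1}{\sqrt{e}}+o(1))\tfrac{(q!)^3}{q^{3q-1}}$. Combined with Lemma~\ref{lm:concatenation} applied to $C$ (with prefactor $\tfrac{1}{(q-1)n}=\tfrac{1}{3(q-1)}$), this delivers exactly the stated lower bound on $R_q$.

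For the \emph{moreover} clause, comparing our bound to the probabilistic bound $-\tfrac{1}{q-1}\log_2(1-\tfrac{q!}{q^q})$ reduces, by monotonicity of $-\log_2$, to verifying the inequality $1-\tfrac{q!|\mS(C)|}{q^{2q}}<(1-\tfrac{q!}{q^q})^3$. After the simplification above this is precisely $A_3>\tfrac{(q!)^2}{q^q}$. Proposition~\ref{prop:key} gives $A_3/\bigl((q!)^2/q^q\bigr)=(\tfrac{1}{\sqrt{e}}+o(1))\,q$, which grows linearly in $q$ and hence exceeds $1$ for all sufficiently large odd $q$.

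No step constitutes a real obstacle: the heavy lifting, namely the asymptotic for $s_q$, is already imported from \cite{BMM} through Proposition~\ref{prop:key}, and the concatenation machinery is packaged in Lemmas~\ref{lm:code} and~\ref{lm:concatenation}. The only points needing care are tracking the sign $(-1)^{n-1}=+1$ correctly at $n=3$, and verifying that the $(q!)^3/q^{3q}$ term produced by expanding the cube cancels \emph{exactly}, so that the improvement over the probabilistic bound is genuinely of the advertised order $(q!)^3/q^{3q-1}$ rather than being swamped by a lower-order contribution.
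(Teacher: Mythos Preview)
Your proposal is correct and follows essentially the same route as the paper: specialize Lemma~\ref{lm:code} to $n=3$ with $G=\ZZ_q$, feed the resulting $|\mS(C)|$ into Lemma~\ref{lm:concatenation}, substitute the asymptotic $A_3=(\tfrac{1}{\sqrt{e}}+o(1))\tfrac{(q!)^2}{q^{q-1}}$ from Proposition~\ref{prop:key}, and then reduce the comparison with the probabilistic bound to $A_3>\tfrac{(q!)^2}{q^q}$. You actually spell out the algebraic cancellation of the $(q!)^3/q^{3q}$ term more explicitly than the paper, which simply says ``plugging $A_3$ into \eqref{eq:8} and \eqref{eq:rate}''; otherwise the arguments coincide.
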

\begin{proof}
From \eqref{eq:random}, it suffices to show $A_3>\frac{(q!)^2}{q^q}.$
For large odd $q$, this inequality is reduced to prove
$\left(\frac{1}{\sqrt{e}}+o(1)\right)\frac{(q!)^3}{q^{3q-1}}>\frac{(q!)^3}{q^{3q}}.$
This holds as $\frac{1}{\sqrt{e}}+o(1)>\frac{1}{q}$ for sufficiently large $q$.
\end{proof}

As $s_q=0$ for even $q$, we have to replace group $\ZZ_q$ by other abelian groups of order $q$.
Recently, Eberhard \cite{Ebe2017} extended Proposition \ref{prop:key} to any abelian group $G$ with
$\sum_{x\in G}x=0$ and size $q$. In fact, he proved an even more general result.

\begin{prop}[\cite{Ebe2017}]\label{prop:group}
Let $G$ be an abelian group of size $q$ and $f$ is a function from $[q]$ to $G$ such that $\sum_{i=1}^q f(i)=\sum_{x\in F}x$.
Let $S$ be the collection of bijections that maps $[q]$ to $G$. Then, the set of $\{(\pi_1,\pi_2,\pi_3)\in S^3:\pi_1(i)+\pi_2(i)+\pi_3(i)=f(i), \forall i\in [q]\}$ is of size $(\frac{1}{\sqrt{e}}+o(1))\frac{q!^3}{q^{q-1}}$.
\end{prop}
Let $G$ be an ablian group of size $q=0 \pmod 4$ and $f$ be a zero function, i.e., $f(i)=0$ for all $i\in G$. We have the following corollary.
\begin{cor}\label{cor:group}
Let $s_G$ be the number of pairs $(\pi_1,\pi_2)$ of bijections $[q]\rightarrow G$ such that $\pi_1+\pi_2$ is a bijection as well.
Then, $s_G$ is $(\frac{1}{\sqrt{e}}+o(1))\frac{q!^3}{q^{q-1}}$.
\end{cor}

\begin{theorem}\label{thm:3.11}
For every integer $q$ with $q=0 \pmod 4$, one has
$$
R=-\frac{1}{3(q-1)}\log_2\bigg(1-3\frac{q!}{q^q}+3\frac{(q!)^2}{q^{2q}}-\left(\frac{1}{\sqrt{e}}+o(1)\right)\frac{(q!)^3}{q^{3q-1}}\bigg).
$$
Moreover, for every sufficiently large  $q$, this rate is bigger than that given by the probabilistic lower bound.
\end{theorem}
\begin{proof}
Since $q=0 \pmod 4$, let $q=2^rp$ with an odd integer $p$ and $r\geq 2$. Let $G=\F_{2^r}\times \ZZ_p$. It is clear that $G$ is an abelian group and $\sum_{x\in G}x=0$. Define the code $C:=\{(x,y,x+y): x,y\in G\}$. Then, $C$ is an MDS code with dimension $2$ and length $3$. It remains to bound $A_3$. This is equivalent to counting the pair of bijections $(\pi_1,\pi_2)$: $[q]\rightarrow F$ such that $\pi_1+\pi_2$ is a bijection as well.
Corollary \ref{cor:group} says that the number $A_3$ of $C$ is $\frac{s_G}{q!}=(\frac{1}{\sqrt{e}}+o(1))\frac{q!^2}{q^{q-1}}$. Plugging $A_3$ into \eqref{eq:8} and \eqref{eq:rate} gives the desired result.
\end{proof}

The lower bounds given in Theorems \ref{thm:main} and \ref{thm:3.11} make use of linear codes over an abelian group of length $3$ and dimension $2$. As we have seen, this code does not always give the best lower bound. In the rest of this section, we show that [4,2]-MDS code over an abelian group provides a better lower bound than those given in  Theorems \ref{thm:main} and \ref{thm:3.11}.

\begin{lemma}\label{lem:x10}  Let $q\ge 3$ be an odd integer. Consider the code
\[C=\{(x,y,x+y,x-y):\; x,y\in \ZZ_q\}.\]
Then one has
\[|\mS(C)|\ge {4\choose 1}q^q-{4\choose 2}q!+3\frac{s_q}{q!}.\]
\end{lemma}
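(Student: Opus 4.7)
The plan is to apply inclusion--exclusion to the four sets $\mA_1,\mA_2,\mA_3,\mA_4$ from \eqref{eq:x12}, whose union is exactly $\mS(C)$. First I would observe that $C$ is a $[4,2]$ linear code over $\ZZ_q$ in which every pair of coordinate positions forms an information set: the $2\times 2$ minors of the generator $\left(\begin{smallmatrix}1&0&1&1\\0&1&1&-1\end{smallmatrix}\right)$ are all units in $\ZZ_q$, where the minor $-2$ coming from columns $\{3,4\}$ is precisely where the hypothesis that $q$ is odd is needed. Consequently, the same counting argument as in the proof of Lemma~\ref{lm:inclusion-exclusion} gives $|\mA_i|=q^q$ and $|\mA_i\cap\mA_j|=q!$ for all $i\ne j$, contributing $\binom{4}{1}q^q-\binom{4}{2}q!$ to the Bonferroni expansion.

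The core of the proof is to show that each of the four triple intersections $|\mA_i\cap\mA_j\cap\mA_k|$ equals $s_q/q!$. For $\{i,j,k\}=\{1,2,3\}$ a $q$-element subset in the intersection corresponds to an unordered $q$-set of pairs $(x_j,y_j)$ such that the $x$'s, the $y$'s and the $(x+y)$'s each cover $\ZZ_q$; this is exactly $s_q$ ordered tuples divided by $q!$. The triple $\{1,2,4\}$ becomes $\{1,2,3\}$ after the substitution $y\mapsto -y$, which is a bijection of $\ZZ_q$. For $\{1,3,4\}$ I would reparametrize by $u=x+y$, $v=x-y$, an invertible change of variables since $2$ is a unit in $\ZZ_q$; the event then becomes that $u$, $v$ and $(u+v)/2$ are each permutations of $\ZZ_q$, which, again by invertibility of $2$, is equivalent to $u$, $v$, $u+v$ all being bijections, giving $s_q/q!$. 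The triple $\{2,3,4\}$ is analogous with $u-v$ in place of $u+v$. Summing yields $\sum_{i<j<k}|\mA_i\cap\mA_j\cap\mA_k|=4\,s_q/q!$.

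For the final term, the inclusion $\mA_1\cap\mA_2\cap\mA_3\cap\mA_4\subseteq\mA_1\cap\mA_2\cap\mA_3$ gives $|\mA_{1234}|\le s_q/q!$, and substituting this upper bound in the signed sum preserves a valid lower bound. Assembling the pieces via inclusion--exclusion,
\[
|\mS(C)|=4q^q-6q!+4\tfrac{s_q}{q!}-|\mA_{1234}|\ \ge\ 4q^q-6q!+3\tfrac{s_q}{q!},
\]
which is the claimed inequality. The main obstacle is the case analysis for the four triple intersections; the reductions for $\{1,3,4\}$ and $\{2,3,4\}$ crucially rely on $2$ being a unit in $\ZZ_q$, so the oddness of $q$ cannot be dropped without rethinking the construction.
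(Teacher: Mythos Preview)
Your proposal is correct and follows essentially the same approach as the paper: inclusion--exclusion over $\mA_1,\dots,\mA_4$, the MDS/information-set argument for the single and pairwise terms, the identification of each triple intersection with $s_q/q!$ via the appropriate invertible change of variables, and the bound $A_4\le |\mA_{\{1,2,3\}}|=s_q/q!$. Your treatment is in fact slightly more explicit than the paper's, which works out only the case $T=\{1,3,4\}$ and appeals to ``similarly'' for the other three triples.
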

\begin{proof} Similar to the arguments in Lemma \ref{lm:inclusion-exclusion}, we have
\[|\mS(C)|={4\choose 1}q^q-{4\choose 2}q!+A_3-A_4,\]
where $B_i$ is the number defined in \eqref{eq:x1}. For any subset $T\subseteq [4]$ of size $3$, we claim that $|\mA_T|=\frac{s_q}{q!}$. To prove this claim, let us only consider  the case where $T=\{1,3,4\}$. Note that $C$ can be rewritten as $C=(2^{-1}(w+z),2^{-1}(w-z),w,z):\; w,z\in \ZZ_q\}$. If the third and fourth positions of $\ZZ_q$ are associated  with two permutations $\pi_1$ and $\pi_2$, respectively, then the first position forms a permutation of $\ZZ_q$ if and only if $2^{-1}(\pi_1+\pi_2)$ is a permutation of $\ZZ_q$. This is equivalent to that $\pi_1+\pi_2$ is a permutation of $\ZZ_q$. Hence, we have $|\mA_T|=\frac{s_q}{q!}$. We can similarly prove the claim for other three cases.

Hence, we have $A_3=4\frac{s_q}{q!}$. As we have $A_4\le |\mA_{[3]}|=\frac{s_q}{q!}$, the desired result follows.
\end{proof}

\begin{theorem}\label{thm:3.13}
For any odd integer $q\ge 3$,  one has
$$
R_q\ge-\frac{1}{4(q-1)}\log_2\bigg(\left(1-\frac{q!}{q^q}\right)^4-\left(\frac{3q}{\sqrt{e}}+o(q)\right)\left(\frac{q!}{q^q}\right)^3\bigg).
$$
Moreover, for every sufficiently large  odd $q$, this rate is bigger than that given in Theorem {\rm \ref{thm:main}}.
\end{theorem}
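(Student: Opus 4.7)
The plan is to instantiate the concatenation framework of Lemma~\ref{lm:concatenation} with the $[4,2]$-MDS-type inner code $C=\{(x,y,x+y,x-y):x,y\in\ZZ_q\}$ from Lemma~\ref{lem:x10}, and then feed the asymptotic estimate for $s_q$ of Proposition~\ref{prop:key} into the resulting bound.

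Concretely, applying Lemma~\ref{lm:concatenation} to $C$ (length $n=4$, size $m=q^2$) gives $R_q\ge -\tfrac{1}{4(q-1)}\log_2(1-q!\,|\mS(C)|/q^{2q})$. Inserting the lower bound $|\mS(C)|\ge 4q^q-6q!+3s_q/q!$ from Lemma~\ref{lem:x10} together with the estimate $s_q=(1/\sqrt{e}+o(1))(q!)^3/q^{q-1}$ from Proposition~\ref{prop:key}, and writing $a=q!/q^q$ for brevity, I obtain
$$1-\frac{q!\,|\mS(C)|}{q^{2q}}\le 1-4a+6a^2-\Bigl(\tfrac{3q}{\sqrt{e}}+o(q)\Bigr)a^3.$$
To cast this in the exact form printed in the theorem, I use the identity $(1-a)^4=1-4a+6a^2-4a^3+a^4$ to rewrite the right-hand side as $(1-a)^4-a^4-(3q/\sqrt{e}-4+o(q))a^3$, drop the nonpositive $-a^4$ (which only weakens the upper bound), and absorb the constant $-4$ into the $o(q)$ slot. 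This establishes the first assertion.

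For the strict improvement over Theorem~\ref{thm:main}, let $B_3$ and $B_4$ denote the arguments of $\log_2$ appearing in Theorems~\ref{thm:main} and~\ref{thm:3.13}, respectively; the claim is equivalent to $-\tfrac{1}{4}\log B_4>-\tfrac{1}{3}\log B_3$, which is independent of the logarithm base. Using the power series $-\log(1-x)=x+x^2/2+x^3/3+\cdots$ and tracking all contributions up to order $qa^3$, I find
$$\frac{-\log B_3}{3}=a+\frac{a^2}{2}+\frac{qa^3}{3\sqrt{e}}+o(qa^3),\qquad \frac{-\log B_4}{4}=a+\frac{a^2}{2}+\frac{3qa^3}{4\sqrt{e}}+o(qa^3),$$
so the two expansions coincide through order $a^2$ and differ at order $qa^3$ by the explicitly positive amount $\bigl(\tfrac{3}{4}-\tfrac{1}{3}\bigr)\tfrac{qa^3}{\sqrt{e}}=\tfrac{5qa^3}{12\sqrt{e}}$. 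The main obstacle is precisely this final comparison: because the leading two terms in $a$ cancel exactly between the two expansions, one must carry the Taylor expansion out to the $qa^3$ level and then verify that the hidden $o(qa^3)$ remainders are eventually dominated by the explicit gain $5qa^3/(12\sqrt{e})$, which follows from the definition of little-$o$ for all sufficiently large odd $q$.
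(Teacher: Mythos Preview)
Your proof is correct and follows essentially the same approach as the paper: you apply Lemma~\ref{lm:concatenation} to the code of Lemma~\ref{lem:x10}, substitute the asymptotic for $s_q$ from Proposition~\ref{prop:key}, and absorb the lower-order terms $-4a^3+a^4$ into the $o(q)\,a^3$ slot (the paper does this last step implicitly in the equality preceding its display). For the comparison with Theorem~\ref{thm:main} you expand $-\tfrac1n\log B_n$ via the Taylor series of $\log(1-x)$ and read off the $qa^3$ coefficients, whereas the paper equivalently compares $B_4^3$ against $B_3^4$ by binomial expansion; both computations isolate the same leading-order gap (your $\tfrac{5}{12\sqrt e}\,qa^3$ matches the paper's $\tfrac{9-4}{\sqrt e}\,qa^3$ after dividing by $12$).
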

\begin{proof} Let $C$ be the $q$-ary code given in Lemma \ref{lem:x10}. Then we have
\begin{equation}
1-\frac{q!|\mS(C)|}{|C|^q}\le 1-{4\choose 1}\frac{q!}{q^q}+{4\choose 2}\left(\frac{q!}{q^q}\right)^2-3\frac{s_q}{q^{2q}}=\left(1-\frac{q!}{q^q}\right)^4-\left(\frac{3q}{\sqrt{e}}+o(q)\right)\left(\frac{q!}{q^q}\right)^3.
\end{equation}
The first claim is proved. To prove the second claim, it is sufficient to show that
\[
\left(\left(1-\frac{q!}{q^q}\right)^4-\left(\frac{3q}{\sqrt{e}}+o(q)\right)\left(\frac{q!}{q^q}\right)^3\right)^{1/4}<\bigg(1-3\frac{q!}{q^q}+3\frac{(q!)^2}{q^{2q}}-\left(\frac{1}{\sqrt{e}}+o(1)\right)\frac{(q!)^3}{q^{3q-1}}\bigg)^{1/3},
\]
i.e.,
\begin{equation}\label{eq:y10}
\left(\left(1-\frac{q!}{q^q}\right)^4-\left(\frac{3q}{\sqrt{e}}+o(q)\right)\left(\frac{q!}{q^q}\right)^3\right)^3<\bigg(\left(1-\frac{q!}{q^q}\right)^3-\left(\frac{q}{\sqrt{e}}+o(q)\right)\left(\frac{q!}{q^q}\right)^3\bigg)^4.
\end{equation}
The left-hand side of \eqref{eq:y10} is
\begin{equation}\label{eq:y12}
\left(1-\frac{q!}{q^q}\right)^{12}-\left(\frac{9q}{\sqrt{e}}+o(q)\right)\left(\frac{q!}{q^q}\right)^3(1+o(1))=\left(1-\frac{q!}{q^q}\right)^{12}-\frac{9q}{\sqrt{e}}\left(\frac{q!}{q^q}\right)^3(1+o(1)).
\end{equation}
Similarly, the right-hand side of \eqref{eq:y10} is
\begin{equation}\label{eq:y11}
\left(1-\frac{q!}{q^q}\right)^{12}-\frac{4q}{\sqrt{e}}\left(\frac{q!}{q^q}\right)^3(1+o(1)).
\end{equation}
As the number of \eqref{eq:y12} is less than the number of \eqref{eq:y11}, the second claim follows.
\end{proof}
Similar to the case where $q$ is odd, we can also improve the lower bound given in Theorem \ref{thm:3.11} if $q$ is divisible by $4$.

\begin{theorem}\label{thm:3.14}
For any integer $q$ with $q=0\pmod{4}$,  one has
$$
R_q\ge-\frac{1}{4(q-1)}\log_2\bigg(\left(1-\frac{q!}{q^q}\right)^4-\left(\frac{3q}{\sqrt{e}}+o(q)\right)\left(\frac{q!}{q^q}\right)^3\bigg).
$$
Moreover, for every sufficiently large  $q$, this rate is bigger than that given in Theorem {\rm \ref{thm:3.11}}.
\end{theorem}
\begin{proof} {\it Case 1:} $q=2^r$ for some integer $r\ge 2$. Choose an element $\Ga\in \F_q-\F_2$ and consider the code $C=\{(x,y,x+y,x+\Ga y):\; x,y\in \F_q\}.$ Then as in the proof of Lemma \ref{lem:x10}, one can show that
$|\mS(C)|\ge {4\choose 1}q^q-{4\choose 2}q!+3\frac{s_q}{q!}.$ By the same arguments in the proof of Theorem \ref{thm:3.13}, we obtain the desired result.

{\it Case 2:} $q=2^rp$ for some integer $r\ge 2$ and an odd $p\ge 3$. Choose an element $\Ga\in \F_{2^r}-\F_2$ and consider the ring $\F_{2^r}\times\ZZ_p$. Define the code $C=\{(x,y,x+y,x+(\Ga,-1) y):\; x,y\in \F_{2^r}\times\ZZ_p\}.$
$C$ is a $[4,2]$-MDS code by observing that both $(\Ga,-1)$ and $(\Ga,-1)-(1,1)=(\Ga-1,-2)$ are invertible elements in $\F_{2^r}\times\ZZ_p$.
The desired result then follows from the similar arguments in the proofs of Lemma \ref{lem:x10} and Theorem \ref{thm:3.13}.
\end{proof}

\section{A  lower bound for $q=2 \pmod{4}$}
The previous section provides a construction of a $q$-perfect hash code for any $q\neq 2 \mod 4$. This construction does not work for the case $q=2 \pmod 4$ because if $\pi_1$ and $\pi_2$ are two bijections from $\ZZ_q$ with $q=2\pmod{4}$, the sum $\pi_1+\pi_2$ is not a bijection. To see this, any bijection $\pi$ satisfies that
$$
\sum_{i\in \ZZ_q}\pi(i)=\sum_{i\in \ZZ_q}i=(q-1)\times \frac{q}{2} \bmod q
$$
which is not divisible by $q$ when $q=2\pmod{4}$. However, the sum of two bijections satisfies that
$$
\sum_{i\in \ZZ_q}\bigg(\pi_1(i)+\pi_2(i)\bigg)=q(q-1)=0 \bmod q.
$$
It is clear that $s_q$ is $0$ in this case. Therefore, we have to look for other tools to achieve our goal.

In the rest of this section, we assume that $q=2\pmod 4$. Let $C=\{(x,y,-(x+y)),x,y\in \ZZ_q\}\cup \{(x,y,-(x+y)+\frac{q}{2}),x,y\in \ZZ_q\}$. It is clear that $C$ is the union of two MDS codes with $C_1=\{(x,y,-(x+y)),x,y\in \ZZ_q\}$ and $C_2=C_1+(0,0,\frac{q}{2})$. Recall

$$
\mA_i=\{\{\bc_1,\bc_2,\ldots,\bc_q\}\subseteq C:\; \{\proj_i(\bc_1),\ldots,\proj_i(\bc_q)\}=\ZZ_q\}.
$$
We want to estimate the size of $S(C)=|\mA_1\cup\mA_2\cup\mA_3|$.

\begin{lemma}\label{lm:set}
Let $C$ be the code and $\mA_i$ be the set defined above. Then, we have
\begin{equation}\label{eq:SC}
|\mS(C)|=3(2q)^q-3\times2^q(q!)+|\mA_1\cap\mA_2\cap\mA_3|.
\end{equation}
\end{lemma}
\begin{proof}

By the inclusion-exclusion principle, we have
$$
|\mS(C)|=\sum_{i=1}^3 |\mA_i|-(|\mA_1\cap \mA_2|+|\mA_2\cap \mA_3|+|\mA_1\cap\mA_3|)+|\mA_1\cap \mA_2\cap \mA_3|.
$$
The first two terms can be calculated precisely. Due to the symmetry and MDS property, it suffices to calculate $|\mA_1|$ and $|\mA_1\cup\mA_2|$.
Note that $C$ is the union of two MDS codes $C_1$ and $C_2$. This means, given any bijection $\pi=(x_1,\ldots,x_q)$ from $[q]$ to $\ZZ_q$, there are
$2q$ codewords $\bc_i$ in $C$ such that $\proj_1(\bc_i)=x_i$ for any $i\in [q]$. Note that $x_1,\ldots,x_q$ are all distinct, thus the number of tuples $(\bc_1,\bc_2,\ldots,\bc_q)\in C^q$ such that $(\proj_1(\bc_1),\ldots,\proj_1(\bc_q))=\pi$
is $(2q)^{q}$. Since there are $q!$ bijections, we conclude that
$$
\sum_{i=1}^3 |\mA_i|=3|\mA_1|=3\times \frac{(2q)^q(q!)}{q!}=3\times (2q)^q.
$$
We proceed to calculate $|\mA_1\cap \mA_2|$. Let $\pi_1=(x_1,\ldots,x_q)$ and $\pi_2=(y_1,\ldots,y_q)$ be any bijections from $[q]$ to $\ZZ_q$. Since $C_1$ and $C_2$ are $[3,2]$-MDS codes, there are exactly two codewords $\bc_i$, one from $C_1$ and another one from $C_2$ such that $(\proj_1(\bc_i),\proj_2(\bc_i))=(x_i,y_i)$ for all $i\in [q]$.
Since there are $(q!)^2$ pairs of bijections $(\pi_1,\pi_2)$, we conclude that
$$
|\mA_1\cap \mA_2|+|\mA_2\cap \mA_3|+|\mA_1\cap\mA_3|=3|\mA_1\cap \mA_2|=3\times \frac{2^q(q!)^2}{q!}=3\times 2^q q!
$$
The proof is completed.
\end{proof}
Plugging the Equation \eqref{eq:SC} into the Equation \eqref{eq:rate} gives
\begin{eqnarray}\label{eq:Rq}
R_q&\ge& -\frac{1}{(q-1)3}\log_2\left(1-\frac{q!|\mS(C)|}{(2q^2)^q}\right)\\ \nonumber
&=&-\frac{1}{(q-1)3}\log_2\left(1-3\times\frac{q!}{q^q}+3\times (\frac{q!}{q^q})^2-\frac{q! |\mA_1\cap\mA_2\cap\mA_3|}{2^qq^{2q}}\right).
\end{eqnarray}

To get a good lower bound on $R_q$, we have to  find a  reasonable lower bound on the size of $\mA_1\cap\mA_2\cap\mA_3$.
\begin{theorem}\label{thm:newbound}
There exists a $q$-perfect hash code with rate at least
$$
R_q\ge-\frac{1}{3(q-1)}\log_2\left(1-3\times\frac{q!}{q^q}+3\times (\frac{q!}{q^q})^2-(\frac{1}{2\sqrt{e}}+o(1))\frac{(q!)^3}{q^{3q-1}}\right)
$$
for $q=2\pmod 4$.
Moreover, for every sufficiently large  $q$, this rate is bigger than that given in Theorem {\rm \ref{thm:3.11}}.
\end{theorem}
\begin{proof}
We choose $C=C_1\cup C_2$ as the inner code and the outer code is defined by Lemma \ref{lm:concatenation} accordingly.
Thanks to Lemma \ref{lm:set}, it remains to lower bound the size of $\mA_1\cap\mA_2\cap\mA_3$. Let
$\{\bc_1,\ldots,\bc_q\}$ be any set belonging to $\mA_1\cap\mA_2\cap\mA_3$ with $\bc_i=(x_i,y_i,z_i)$. Let
$\pi_1:=(x_1,\ldots,x_q),\pi_2:=(y_1,\ldots,y_q),\pi_3=(z_1,\ldots,z_q)$ are three bijections by the definition of $\mA_1\cap\mA_2\cap\mA_3$. Assume that there are $\ell$ codewords of $\{\bc_1,\ldots,\bc_q\}$ from $C_1$ and $q-\ell$ from $C_2$.
Without loss of generality, let $\bc_1,\bc_2,\ldots,\bc_\ell\in C_1$ and $\bc_{\ell+1},\ldots,\bc_{q}\in C_2$. By the definition of $C_1$ and $C_2$, we have $\pi_1(i)+\pi_2(i)+\pi_3(i)=0$ for $i=1,\ldots,\ell$ and $\pi_1(i)+\pi_2(i)+\pi_3(i)=\frac{q}{2}$ for $i=\ell+1,\ldots,q$. Let $f$ be a map from $[q]$ to $\ZZ_q$ such that $f(i)=0$ for $i=1,\ldots,\ell$ and $f(i)=\frac{q}{2}$ for $i=\ell+1,\ldots,q$. If $\ell$ is odd number, then $\sum_{i\in [q]}f(i)=\frac{q}{2}=\sum_{x\in \ZZ_q}x$.
By Proposition \ref{prop:group}, when $\ell$ is odd number, the number of triples of bijections $(\pi_1,\pi_2,\pi_3)$ with $\pi_1+\pi_2+\pi_3=f$ is $(\frac{1}{\sqrt{e}}+o(1))\frac{q!^3}{q^{q-1}}$. Since there are $\binom{q}{\ell}$ ways to choose a $\ell$-codewords subset from $C_1$, the number of codewords $(\bc_1,\ldots,\bc_q)$ belonging to $\mA_1\cap\mA_2\cap\mA_3$ is at least
$$
\sum_{i=0}^{\frac{q-2}{2}}\binom{q}{2i+1}(\frac{1}{\sqrt{e}}+o(1))\frac{(q!)^3}{q^{q-1}q!}=(\frac{1}{\sqrt{e}}+o(1))\frac{2^{q-1}(q!)^2}{q^{q-1}}
$$
Plug this value into Equation \eqref{eq:Rq} yields the desired result.
\end{proof}
\begin{rmk}
The probabilistic lower bound \eqref{eq:3} can be written as
$$
-\frac{1}{3(q-1)}\log_2\left(1-3\times\frac{q!}{q^q}+3\times (\frac{q!}{q^q})^2-\frac{(q!)^3}{q^{3q}}\right).
$$
It is clear that the lower bound given by Theorem \ref{thm:newbound} is better as
$$
\frac{q}{2\sqrt{e}}\times\frac{(q!)^3}{q^{3q}}>\frac{(q!)^3}{q^{3q}}.
$$
\end{rmk}

We note that our construction can be applied for any $q=2\pmod 4$. For small $q$, we do the calculation with the help of the computer. Our numerical result shows that our construction beats the probabilistic lower bound for $q=6,10,14$. We believe that such trend should keep as well when $q$ grows. In conclusion, this construction is a very promising candidate to beat the probabilistic lower bound for all $q=2\pmod 4$. The following result summarizes our numerical computations for $q=6,10,14$.

\begin{theorem}\label{thm:smallq}
From our new construction, the following holds,
$R_6\geq 0.004488, R_{10}\geq 5.8180030\times 10^{-5}, R_{14}\geq 8.7066030151\times 10^{-7}$. In comparison, the previous probabilistic lower bound yields $R_6\geq 0.004487, R_{10}\geq 5.8180021\times 10^{-5}, R_{14}\geq 8.706603140\times 10^{-7}$.
\end{theorem}

\section{Lower bounds on $R_5$ and $R_7$}
In Section 3, we let inner code to be the MDS code $C$ and estimate the size $|\mS(C)|$ either numerically or asymptotically. However, MDS codes do not always provide the best lower bound on $R_q$. In this section, we present a class of nonlinear inner code $C$ where many $q$-element subsets are separated.
\begin{lemma}\label{thm:inner}
Assume $q$ is a prime. There exists a code $C$ over $\ZZ_q$ with length $q$ and size $2q$ such that
$|\mS(C)|=2^qq-2(q-1)$.
\end{lemma}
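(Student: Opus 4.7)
The plan is to exhibit a single explicit code $C$ and then compute $|\mS(C)|$ by inclusion--exclusion on the sets $\mA_i$ from \eqref{eq:x12}. Take
\[
C=C_1\cup C_2,\qquad C_1:=\{(a,a,\ldots,a):a\in\ZZ_q\},\qquad C_2:=\{(b,b+1,\ldots,b+q-1):b\in\ZZ_q\},
\]
so $|C|=2q$, since no codeword of $C_2$ is constant for $q\ge 2$. The key structural feature, which is why this code is tractable, is that at every coordinate $i\in[q]$ each value $v\in\ZZ_q$ is attained by exactly two codewords of $C$: the constant codeword $(v,\ldots,v)\in C_1$ and the unique $\mv{b}_{v-i+1}\in C_2$ reading $v$ at position $i$. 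Consequently, an element $S\in\mA_i$ is determined by a \emph{selector} $f_S:\ZZ_q\to\{0,1\}$, where for each value $v$ we take the $C_1$-choice if $f_S(v)=0$ and the $C_2$-choice if $f_S(v)=1$. Distinct selectors give distinct $q$-subsets because $C_1\cap C_2=\emptyset$, so $|\mA_i|=2^q$.

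The main step is to show that for every $i\ne j$,
\[
\mA_i\cap\mA_j=\{C_1,C_2\}.
\]
Set $\Delta=j-i\ne 0\in\ZZ_q$. For $S\in\mA_i$ with selector $f=f_S$, the list of values taken by $S$ at coordinate $j$ is the multiset
\[
\{v:f(v)=0\}\;\sqcup\;\{v+\Delta:f(v)=1\}\;=\;\{w:f(w)=0\}\;\sqcup\;\{w:f(w-\Delta)=1\},
\]
using that the shifted codeword selected for value $v$ at position $i$ reads $v+\Delta$ at position $j$. Requiring this multiset to equal $\ZZ_q$ with multiplicity one forces, for each $w$, that $f(w)=f(w-\Delta)$: otherwise the multiplicity at $w$ is $0$ or $2$. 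Since $q$ is prime and $\Delta\ne 0$, the map $w\mapsto w-\Delta$ acts transitively on $\ZZ_q$, so $f$ must be constant, giving exactly $S=C_1$ (for $f\equiv 0$) or $S=C_2$ (for $f\equiv 1$); both trivially lie in every $\mA_i$. This is the only place where the primality of $q$ enters, and is the main obstacle.

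The same argument shows $\bigcap_{i\in T}\mA_i=\{C_1,C_2\}$ for every $T\subseteq[q]$ with $|T|\ge 2$. Inclusion--exclusion then gives
\[
|\mS(C)|=\left|\bigcup_{i=1}^{q}\mA_i\right|=q\cdot 2^q+2\sum_{k=2}^{q}(-1)^{k-1}\binom{q}{k}=q\cdot 2^q+2(1-q)=2^qq-2(q-1),
\]
where the tail sum is evaluated using $\sum_{k=0}^q(-1)^k\binom{q}{k}=0$. This matches the claimed value.
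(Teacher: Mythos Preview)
Your proof is correct and uses the same code as the paper (with the roles of $C_1$ and $C_2$ swapped in labeling). The paper counts $|\mS(C)|$ directly: for each $i$-element subset of the shift-code with $0<i<q$, it shows---using primality in the same way you do---that there are exactly $q$ ways to complete it to a separated $q$-set with constants, and that these are all distinct, giving $\sum_{i=1}^{q-1}q\binom{q}{i}+2$. You instead use inclusion--exclusion on the $\mA_i$, reducing the problem to computing $|\mA_i|=2^q$ and $|\mA_T|=2$ for $|T|\ge 2$. The underlying combinatorial fact is the same (only the two ``pure'' sets $C_1,C_2$ are separated at more than one coordinate), but your selector-function framing and the inclusion--exclusion bookkeeping are arguably cleaner than the paper's direct enumeration, and fit more naturally with the $\mA_T$-notation used elsewhere in the paper.
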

\begin{proof}
Let $C_1=\{\bc_1=(0,1\ldots,q-1),\bc_2=(1,2\ldots q-1,0),\ldots,\bc_q=(q-1,0,\ldots,q-2)\}$, i.e., $C_1$ consists of the codeword $(0,1\ldots,q-1)$ and its $i$th shifts for $i=1,\ldots,q-1$. Let $C_2=\{i\cdot\bi:\; 0\le i\le q-1\}$, where $\bi$ stands for all-one vector of length $q$. Let $C=C_1\cup C_2$.
Obviously, $C$ has length $q$ and size $2q$. It remains to show that $|\mS(C)|=2^qq-2(q-1)$.

We pick any $0<i<q$ codewords $\bc_1,\ldots,\bc_i$ from $C_1$.
Denote by $\bc_j=(c_{j,1},\ldots,c_{j,q})$ for $j\in [q]$. For $t\in [q]$, let
$B_t:=\{c_{1,t},c_{2,t},\ldots,c_{i,t}\}$ be the collection of the $t$-th components of $\bc_1,\ldots,\bc_i$.
It is clear that $|B_t|=i$ by observing that all codewords in $C_1$ have distinct values on each coordinate.
Moreover, we can show that $B_1,\ldots,B_q$ are distinct if $0<i<q$. Assume not and we have $B_1=B_a$ for some $a\in [q]$.
The structure of code $C_1$ tells us that $c_{j,a}=c_{j,1}+a-1$ for $j=1,\ldots,i$. This coupled with $B_1=B_a$ implies that both $c_{1,1}$ and $c_{1,1}+a-1$ belong to $B_1$. Continue this argument and we finally arrive at
$\{c_{1,1},c_{1,1}+a-1,\ldots,c_{1,1}+(q-1)(a-1)\}\subseteq B_1$.
It is clear that $c_{1,1},c_{1,1}+a-1,\ldots,c_{1,1}+(q-1)(a-1)$
are distinct which contradicts our assumption that $|B_t|=i<q$.

Now, we know that $B_1,\ldots,B_q$ are distinct. For each set $B_t=\{c_{1,t},c_{2,t},\ldots,c_{i,t}\}$,
we choose a $(q-i)$-element set $A_t:=\{\mathbf{i}: i\notin B_t\}\subseteq C_2$. It is clear that $\bc_1,\ldots,\bc_i$ and the codewords in $B$ have distinct symbols on $i$-th coordinate. Moreover, for each value $t$, the set $A_t$ is distinct due to the fact that $B_1,\ldots,B_q$ are distinct.
That means, for any $0<i<q$-element set of $C_1$, we could obtain $q$ distinct $q$-element sets of $C$ that are separated.
If $i=0$ or $i=q$, it is clear that the only $q$-element set that are separated is $C_1$ or $C_2$. Thus,
the total number of $q$-element sets of $C$ that are separated is
$\sum_{i=1}^{q-1}q\binom{q}{i}+2=2^qq-2(q-1)$.
\end{proof}

Combined this construction with Lemma \ref{lm:concatenation} gives following lower bounds on $R_q$ for $q=5$ and $7$.

\begin{cor}
One has $R_5\ge 0.01452$ and $R_7\ge 0.001483$. Furthermore, the lower bounds on $R_5$ and $R_7$ given in this corollary are better than those in Corollary \ref{cor:x0} and the probabilistic lower bound.
\end{cor}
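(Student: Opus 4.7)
The approach is essentially a direct application of the two main tools already established. The plan is to take the concrete nonlinear code $C$ over $\ZZ_q$ of length $q$ and size $2q$ produced by Lemma \ref{thm:inner}, feed the value $|\mS(C)| = 2^q q - 2(q-1)$ into the concatenation bound of Lemma \ref{lm:concatenation}, and then verify numerically that the resulting rate strictly exceeds both the probabilistic lower bound \eqref{eq:3} and the bound obtained in Corollary \ref{cor:x0} for the two specific values $q = 5$ and $q = 7$.

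More concretely, first I would instantiate Lemma \ref{lm:concatenation} with $n = q$, $m = 2q$, obtaining
\[
R_q \ge -\frac{1}{q(q-1)} \log_2\!\left(1 - \frac{q!\,(2^q q - 2(q-1))}{(2q)^q}\right).
\]
For $q=5$, this gives the argument $1 - \frac{120 \cdot 152}{10^5} = 1 - 0.1824 = 0.8176$, so $R_5 \ge -\frac{1}{20}\log_2 0.8176 \approx 0.01453$. For $q=7$, the argument is $1 - \frac{5040 \cdot 884}{14^7} \approx 1 - 0.04226 = 0.95774$, so $R_7 \ge -\frac{1}{42}\log_2 0.95774 \approx 0.001483$. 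These match the figures in the statement.

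To complete the claim of strict improvement, I would evaluate the two competing bounds at $q = 5$ and $q = 7$. The probabilistic lower bound \eqref{eq:3} gives $\frac{1}{4}\log_2(1/(1 - 120/3125)) \approx 0.01412$ for $q = 5$ and $\frac{1}{6}\log_2(1/(1 - 5040/7^7)) \approx 0.001476$ for $q = 7$, both strictly smaller than the values above. For the comparison with Corollary \ref{cor:x0}, I would substitute the tabulated values $A_3 = 15$ for $q = 5$ and $A_3 = 133$ for $q = 7$ into the formula of Lemma \ref{lm:code} and then into Lemma \ref{lm:concatenation} (with $n = 3$, $m = q^2$); a direct computation gives rates of approximately $0.01414$ and $0.001481$ respectively, which the new bounds beat.

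There is no real technical obstacle here: once Lemma \ref{thm:inner} and Lemma \ref{lm:concatenation} are in hand, the statement is a finite numerical check. The only thing to be careful about is accuracy in the floating-point arithmetic for $q = 7$, since the gap over the probabilistic bound is only in the fourth significant digit; I would carry the computation with enough precision (or work with exact rationals and then take logs) to make the strict inequalities unambiguous.
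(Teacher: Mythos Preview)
Your proposal is correct and follows exactly the same route as the paper: invoke Lemma \ref{thm:inner} for the inner code and plug into Lemma \ref{lm:concatenation}. Your write-up is in fact more detailed than the paper's one-line proof, which simply cites the two lemmas; the only minor blemish is that the Corollary \ref{cor:x0} bound for $q=7$ should come out to roughly $0.001477$ rather than $0.001481$, but this does not affect the strict inequality you are establishing.
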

\begin{proof} Take the inner code to be the code in Lemma \ref{thm:inner} for $q=5$ and $7$, respectively.
The desired result follows from  Lemma ~\ref{thm:inner} and  \ref{lm:concatenation}.
\end{proof}

Let us end this section by tabulating our best lower bound, denoted by $R_{new}$, obtained in this paper and the probabilistic lower bound denoted by $R_{ran}$ for some small $q$.
We omit cases for $q\geq 12$. 
\begin{table}[h]\label{fig:table2}

\begin{center}
\begin{tabular}{||c |c|c|c|c||}
  \hline
 $q$  & $4$ & $5$ & $6$ & $7$   \\ \hline
$R_{new}$ & $0.0495$ & $0.01452$ & $ 0.004488$ &  $0.001483$   \\ \hline
 $R_{ran}$  & $0.0473$ & $0.01412$ & $ 0.004477$ & $0.001476$  \\ \hline\hline
 $q$ &   $8$   & $9$ &  $10$ &  $11$  \\ \hline
 $R_{new}$ &  $4.95909\times 10^{-4}$  & $1.689931\times 10^{-4}$ & $5.8180030\times 10^{-5}$ & $2.01855746\times 10^{-5}$   \\ \hline
  $R_{ran}$ &   $4.95905\times 10^{-4}$ & $1.689929\times 10^{-4}$&  $5.8180021\times 10^{-5}$ & $2.01855739\times 10^{-5}$   \\ \hline
\end{tabular}
\end{center}
\caption{New lower bounds versus the probabilistic lower bounds}
 \end{table}

\section*{Acknowledgements}
We are grateful to Venkat Guruswami who brought this topic to us. He gave a talk on his paper \cite{GR} in our seminar when he was visiting Nanyang Technological University in 2018.

\bibliographystyle{plain}
\bibliography{hashcode}

\end{document}